\documentclass[11pt]{article}
\usepackage[utf8]{inputenc}
\usepackage[utf8]{inputenc}
\usepackage[english]{babel}
\usepackage{amsfonts,amsmath,amssymb, mathtools}
\usepackage{amsthm, thmtools, thm-restate}
\usepackage{enumerate, graphicx}
\usepackage{xspace}
\usepackage{url}
\usepackage{enumerate, paralist}
 \usepackage{bbm} 
 \usepackage{float} 
\usepackage{multirow} 
\usepackage{algorithm} 
\usepackage[bb=boondox]{mathalfa} 
\usepackage{soul} 
\usepackage{todonotes}
\usepackage{diagbox}
\usepackage[noend]{algpseudocode} 
\usepackage{hyperref} 
\usepackage[margin=1in]{geometry}
\usepackage{setspace}
\usepackage{boxedminipage}
\usepackage{alltt}

\hypersetup{
    colorlinks,
    linkcolor={red},
    citecolor={blue},
    urlcolor={teal}
}


\newtheorem{theorem}{Theorem}
\newtheorem{lemma}{Lemma}[section]
\newtheorem{claim}[lemma]{Claim}
\newtheorem{proposition}[lemma]{Proposition}
\newtheorem{fact}[lemma]{Fact}

\newtheorem{definition}[lemma]{Definition}

\theoremstyle{definition}

\theoremstyle{plain}

\sloppy
\def\floor#1{\lfloor {#1} \rfloor}
\def\ceil#1{\left\lceil {#1} \right\rceil}

\def\ind{\mathbbm{1}}

\newcommand\Var[2][]{
\ensuremath{\mathrm{\mathbf{Var}}_{#1} 
\pmb{\left[\vphantom{#2}\right.}
{#2}
\pmb{\left.\vphantom{#2}\right]}
}}

\newcommand\E[2][]{
\ensuremath{\mathrm{\mathbf{E}}_{#1} 
\pmb{\left[\vphantom{#2}\right.}
{#2}
\pmb{\left.\vphantom{#2}\right]}
}}

\renewcommand\Pr[2][]{
\ensuremath{\mathrm{\mathbf{Pr}}_{#1} 
\pmb{\left[\vphantom{#2}\right.}
{#2}
\pmb{\left.\vphantom{#2}\right]}
}}

\def\DD{\mathcal{D}}

\def\R{\mathbb{R}}

\def\bin{\ensuremath{\mathrm{\mathbf{Bin}}}}

\def\poly{\ensuremath{\mathrm{poly}}}
\def\polylog{\ensuremath{\mathrm{polylog}}}

\newcommand{\amnote}[1]{\footnote{\textcolor{green}{\textbf{Andrew}}: #1}}

\usepackage{graphicx}
\usepackage{amssymb}
\usepackage{framed}
\usepackage{times}

\usepackage{url}
\usepackage{amsmath}
\usepackage{epstopdf}

\newcommand{\bX}{\boldsymbol{X}}
\newcommand{\bZ}{\boldsymbol{Z}}
\newcommand{\bY}{\boldsymbol{Y}}
\newcommand{\bB}{\boldsymbol{B}}
\newcommand{\bi}{\boldsymbol{i}}
\newcommand{\bG}{\boldsymbol{G}}
\newcommand{\bN}{\boldsymbol{N}}
\newcommand{\N}{\mathbb{N}}

\newcommand{\boldeta}{\boldsymbol{\eta}}
\newcommand{\SimpleEstimator}{\texttt{LogEstimator}}
\newcommand{\eps}{\epsilon}
\newcommand{\eqdef}{\mathop{=}^{\tiny\text{def}}}
\newcommand{\Ber}{\mathrm{Ber}}
\newcommand{\NB}{\mathrm{NB}}

\newcommand{\lsim}{\lesssim}

\newcommand{\bc}{\boldsymbol{c}}

\newcommand{\bs}{\boldsymbol{s}}

\def\citep{\cite}

\def\NB{\text{NB}}

\def\Xmax{X_{\max}}
\def\hq{\pmb{\hat q}}
\def\hH{\pmb{\hat H}}
\def\tilH{\tilde H}

\title{Estimation of Entropy in Constant Space \\ with Improved Sample Complexity}

\author{
Maryam Aliakbarpour\thanks{Boston University and Northeastern University, \texttt{maryam.aliakbarpour@gmail.com}.}
\and
Andrew McGregor \thanks{University of Massachusetts Amherst, \texttt{mcgregor@cs.umass.edu}. Supported by NSF awards CCF-1934846, CCF-1908849, and CCF-1637536.}
\and
Jelani Nelson
\thanks{UC Berkeley, \texttt{minilek@berkeley.edu}. Supported by NSF award CCF-1951384, ONR grant N00014-18-1-2562, ONR DORECG award N00014-17-1-2127, and a Google Faculty Research Award.}
\and
Erik Waingarten \thanks{Stanford University, \texttt{eaw@cs.columbia.edu}. Part of this work is supported by the National Science Foundation under Award no. 2002201 and Moses Charikar's Simons Investigator Award.}
}
\date{\today}

\begin{document}

\maketitle

\begin{abstract}
  Recent work of Acharya et al.~(NeurIPS 2019) showed how to estimate the entropy of a distribution $\mathcal D$ over an alphabet of size $k$ up to $\pm\epsilon$ additive error by streaming over $(k/\epsilon^3) \cdot \polylog(1/\epsilon)$ i.i.d.\ samples and using only $O(1)$ words of memory. In this work, we give a new constant memory scheme that reduces the sample complexity to $(k/\epsilon^2)\cdot \polylog(1/\epsilon)$. We conjecture that this is optimal up to $\polylog(1/\epsilon)$ factors.
\end{abstract}

\section{Introduction}\label{sec:intro}
In the field of {\it streaming algorithms}, an algorithm makes one pass (or few passes) over a database while using memory sublinear in the data it sees to then answer queries along the way or at the data stream's end. Researchers have developed various algorithms, as well as memory lower bounds, for such problems for over four decades \citep{MunroP80,MisraG82,AlonMS99}. For the vast majority of research in the field, the database is assumed to be fixed, and algorithms are then analyzed through the lens of worst case analysis.

In this work, we look to further develop the relationship between streaming algorithms and statistics, specifically studying statistical inference through low-memory streaming algorithms. In this setup, rather than processing a worst-case instance of some fixed database, our input is instead a {\it distribution} $\mathcal D$, and our algorithm processes i.i.d.\ samples from $\mathcal D$ with the goal of inferring its properties. Natural questions then arise, such as understanding the tradeoffs between sample complexity, memory, accuracy, and confidence, or even understanding whether a low-memory algorithm exists at all for a particular inference problem even if we allow the streaming algorithm to draw an unlimited number of samples. Work on streaming algorithms for statistical inference problems began in \cite{GuhaM07}, which studied nonparameteric distribution learning, followed by the work of \cite{ChienLM10}, studying low-memory streaming algorithms for use in robust statistics and distribution property testing. Interest in the area later exploded off after work of \cite{SteinhardtVW16}, which explicitly raised the question of whether low memory might place fundamental limits on learning rates, with a flurry of works proving such limitations in response \citep{MoshkovitzM17,KolRT17,Raz17,GargRT18,SharanSV19,GargRT19,GargKR20,GargKLR21}, starting with a work of \cite{Raz19} on memory/sample tradeoff lower bounds for learning parities ($\mathcal D$ generates $(x,\langle w,x\rangle)$ for $x$ uniform in the hypercube with $w$ an unknown parameter, and the goal is to learn $w$).

In this work, following \cite{AcharyaBIS19}, we focus specifically on the problem of estimating the entropy of an unknown distribution $\mathcal D$ over $\{1,\ldots,k\}$ using a low-memory streaming algorithm over i.i.d.\ samples. It is known that to estimate the entropy up to $\epsilon$ additive error with large constant success probability, without memory constraints the optimal sample complexity is \[n = \Theta \left (\max \left \{\frac 1{\epsilon}\frac k{\log(k/\epsilon)}, \frac{\log^2 k}{\epsilon^2}\right \}\right )\] \citep{ValiantV17,ValiantV11,JiaoVHW15,WuY16}. Prior work by \cite{BatuDKR05} also shows that a sublinear number of samples is possible for multiplicative approximation of entropy for distributions whose entropy is sufficiently large.
The known optimal algorithms from prior work, however, must remember all samples and hence use $\Omega(n)$ words of memory\footnote{As in prior work, we use a ``word'', or ``machine word'', to denote a unit of memory that can hold $\Theta(\log(k/\epsilon))$ bits. Essentially, a machine word is large enough to hold the name of an item in the alphabet, as well as the value of $\epsilon$.}. The algorithm of \cite{AcharyaBIS19} uses only $O(1)$ words of memory, though at the cost of requiring an increased sample complexity of $k \cdot \tilde O( 1/{\epsilon^3})$\footnote{We use $\tilde O(f)$ to denote a function which is $O(f\cdot \textup{poly}(\log f))$}. In this work, our goal is to address the question: {\it to what extent was the worsening of sample complexity in previous work necessary to achieve constant memory?}

\paragraph{Our Contribution.} We show that using $O(1)$ words of memory\footnote{More precisely, we provide a uniform algorithm which given any $k,\epsilon$ generates a program with source code of size $O(\log\log(1/\epsilon))$ words, and that fixed program can then process any stream in $O(1)$ words of working memory; see Section~\ref{sec:bit-complexity} for details.}, it is possible to obtain a sample complexity of $k\cdot\tilde O(1/{\epsilon^2})$, which is an improvement over the previous memory-efficient sample complexity bound which had cubic dependence on $1/\epsilon$. The starting point of our algorithm revisits a simple estimator proposed by \cite{AcharyaBIS19}. Their simple estimator uses $O(k \log^2(k/\eps)/\eps^3)$ samples to estimate the entropy in constant space. Our novel contribution is a modification which estimates a bias incurred by the estimator; this change allows us to use only $O(k \log^2 k \log^2(1/\eps) / \eps^2)$ samples. 
With the simple estimator with improved sampled complexity in hand, we show how an ``interval-based'' algorithm, similar to the one in \cite{AcharyaBIS19}, improves the dependence on $k$ to $k \cdot \tilde{O}(1/\eps^2)$.

We remark that there has been other work on estimating entropy in the data streaming model \citep{BhuvanagiriG06,ChakrabartiCM10,HarveyNO08}, but those works are qualitatively different from our own current work and that of \cite{AcharyaBIS19}. Specifically, they take the worst case point of view, where the stream items are  not drawn i.i.d.\ from a distribution, but rather the stream itself is viewed as a worst-case input and the goal is to estimate its empirical entropy. In that model, $O(1)$ memory algorithms for $\pm \epsilon$ additive estimation to entropy provably do not exist, as there is a known memory lower bound of $\Omega(1/(\log^2(1/\epsilon)\epsilon^2))$ \citep{ChakrabartiCM10}.

\paragraph{Overview of Approach.} We start by describing the basic algorithm of \cite{AcharyaBIS19}. Their basic estimator takes a single random sample $\bi\sim \mathcal D$, followed by $N$ more i.i.d.\ samples. Then, they define $\bN_x$ to be the number of these $N$ samples equal to $\bi$. The estimate $\hat p_{\bi} := \bN_x / N$ is an unbiased estimator of of the probability $p_{\bi}$ of $\bi$ according to $\mathcal D$, and for large $N$, $\log(1/\hat p_{\bi})$ is a reasonable estimator for the entropy $H = H(\mathcal D) = \E{\log(1/p_{\bi})}$ of $\mathcal D$. One can then average many such independent estimates. There is an additional technical detail, that $\hat p_{\bi}$ may be zero (if $\bN_x$ is zero), which is fixed via a ``one-smoothing'' trick of actually setting $\hat p_{\bi} := (\bN_x + 1) / N$ (which introduces an acceptably small amount of additional bias when $N$ is sufficiently large).

Our improvement begins with the observation that $\log(1/\hat p_{\bi})$ is {\it not} an unbiased estimator for $H$. We first propose a similar but different estimator to the previous simple estimator. We also begin by taking a random sample $\bi \sim \mathcal{D}$; however, rather than letting $\bN_x$ be sampled from the binomial distribution $\mathsf{Bin}(N, p_{\bi})$, we sample a negative binomial random variable $\bX$, which is the number of additional draws to see $i$ exactly $t$ more times ($t$ is a parameter of the algorithm). Henceforth we let $\NB(t, p)$ denote such a negative binomial random variable, where the underlying Bernoulli experiment has success probability $p$. Then $\E{\bX} = t/p_{\bi}$, and we will use $\log(\bX/t)$ as a reasonable estimate of $\log(1/p_{\bi})$. This estimator is also biased, but we can correct for this bias using a few more samples. 

Specifically, let $\bY = \bX p_{\bi}/t$ and consider the degree-$r$ Taylor expansion of our estimate $\log(\bX/t)$ and the ideal quantity $\log(1/p_{\bi})$. As it will turn out, the expectation of the degree-$r$ Taylor expansion of $\log(\bX/t) - \log(1/p_{\bi}) = \log \bY$ is a degree-$r$ polynomial in $p_{\bi}$. By drawing $r$ additional samples, we may design an estimator for this polynomial, and subtract it from $\log(\bX / t)$. Correcting some of the bias in this way gives us our improved estimate for $\log(1/p_{\bi})$. Our analysis of this scheme shows that a sample complexity of $(k/\epsilon^2) \cdot \polylog(k/\epsilon)$ suffices. We then describe and analyze an improved algorithm in Section~\ref{sec:bucketing}, which achieves $(k/\epsilon^2) \cdot \polylog(1/\epsilon)$ sample complexity by additionally incorporating a ``bucketing'' scheme, similar to one proposed in \cite{AcharyaBIS19}. The idea is to partition the possibilities for values of $\bX$ into disjoint intervals $I_\ell = [b_{\ell-1},b_\ell)$ for $\ell=1,2,\ldots,L$ and optimized choices of the breakpoints $b_\ell$, then estimate both $\Pr{\bX\in I_\ell}$ and the conditional contributions to entropy conditioned on $\bX\in I_\ell$ for each $\ell$. By estimating separately for each $I_\ell$, one can show that the conditional variance is reduced to obtain an overall smaller sample complexity of $(k/\epsilon^2)\cdot \polylog(1/\epsilon)$, a strict improvement over that of \cite{AcharyaBIS19}; details are in Section~\ref{sec:bucketing}.

\section{A Simple Algorithm and Analysis}\label{sec:simple-alg}

Let $k \in \N$, and $\mathcal{D}$ be an unknown distribution supported on $[k]$. For any $i \in [k]$, we denote the probability that $i \in [k]$ is sampled by $\mathcal{D}$ as $p_i$. The goal is to design a low-space streaming algorithm which receives independent samples from $\mathcal{D}$ and outputs an estimate to the entropy:
\[ H(\mathcal{D}) \eqdef \sum_{i=1}^k p_i \log\left( \frac{1}{p_i} \right) = \E[\bi \sim \mathcal{D}]{\log\left( \dfrac{1}{p_{\bi}} \right)}
,\] 
where logarithms above and throughout this paper are base-$2$, unless otherwise stated.

\subsection{An Estimator for $\pmb{\log(1/p_{i})}$}
As mentioned in Section~\ref{sec:intro}, similarly to \cite{AcharyaBIS19} the algorithm aims to estimate $H(\mathcal{D})$ by taking a sample $\bi \sim \mathcal{D}$ and estimating $\log(1/p_{\bi})$. Then, averaging these estimates will give an estimator for $H(\mathcal{D})$ (albeit with a super-linear dependence on $k$, which we fix in Section~\ref{sec:bucketing}). We describe the estimator in Figure~\ref{fig:simple-estimator}. 

\begin{figure}[t!]
\begin{framed}
\noindent Subroutine \SimpleEstimator$(\mathcal{D}, i)$
\begin{flushleft}\noindent {\bf Input:} Sample access to a distribution $\mathcal{D}$ supported on $[k]$, an index $i \in [k]$ where $p_i \neq 0$.

\noindent {\bf Output:}  A number $\boldeta \in \R_{\geq0}$, which is our bias estimate.

\begin{enumerate}

\item\label{en:line-1} We draw enough samples from $\mathcal{D}$ so that $i$ is sampled exactly $t$ times, and let $\bX \in \N$ denote the number of samples taken. 
\item\label{en:line-2} For $r \in \N$, let $f \colon \R \to \R$ denote the degree-$r$ Taylor expansion of $\log z$ centered at $1$, and $h_t \colon [0, 1] \to \R$ be the degree-$r$ polynomial satisfying
\[ h_t(\rho) = 
\E[\bZ \sim \NB(t, \rho)]{f\left( \dfrac{\bZ \cdot \rho}{t} \right)}
.\]
Finally, $g \colon [0,1]^{r} \to \R$ is the linear function with $g(\rho, \rho^2, \dots, \rho^{r}) = h_t(\rho)$. We take $r$ additional independent samples from $\mathcal{D}$, and for $j \in [r]$, we let $\bB_j$ be the indicator random variable that the first $j$ samples were all $i$. Note that $\{\bB_j\}_{j\in [r]}$ can be encoded using a single counter requiring $\log r$ bits.
\item\label{en:line-3} We return 
\[ \boldeta \eqdef \log\left(\frac{\bX}{t}\right) - g\left( \bB_1,  \bB_2, \dots,  \bB_{r}\right). \]
\end{enumerate}
\end{flushleft}
\end{framed}
\caption{Description of the  estimator for $\log(1/p_i)$.}\label{fig:simple-estimator}\label{fig:LogEstimatorDi}
\end{figure}

There are three main steps in the analysis. In the first, we show that the estimator has small bias. The second is showing that the above estimator has low variance. Finally, we show that the estimator may be computed with few bits. In Figure~\ref{fig:simple-estimator}, we set $r = \Theta(\log(1/\eps))$ and $t = \Theta(\log^2(1/\eps))$ to obtain an estimator whose bias is at most $\eps$ and variance is at most $O(\log^2 k)$. It then follows that repeating the estimate of $\log(1/p_{\bi})$ for $O(\log^2 k / \epsilon^2)$ i.i.d.\ chosen $\bi \sim \mathcal{D}$ gives the desired estimate with probability at least $2/3$. These parameter settings establish the following theorem:

\begin{theorem}\label{thm:main-thm-simple}
There exists a single-pass data stream algorithm using $O(1)$ words of working memory that processes a stream of $O(k\epsilon^{-2} \log^2 k \log^2 (1/\epsilon))$ i.i.d.~samples from an unknown distribution $\mathcal{D}$ on $[k]$ and returns an additive $\epsilon$ approximation of $H(\mathcal{D})$ with probability $2/3$.
\end{theorem}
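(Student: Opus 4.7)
The plan is to prove Theorem~\ref{thm:main-thm-simple} by analyzing the average of $n_0 = \Theta(\log^2 k/\epsilon^2)$ independent invocations of \SimpleEstimator, each run on a freshly drawn seed $\bi \sim \mathcal{D}$ with parameters $t=\Theta(\log^2(1/\epsilon))$ and $r=\Theta(\log(1/\epsilon))$. Since $H(\mathcal{D}) = \E[\bi\sim\mathcal{D}]{\log(1/p_{\bi})}$, by linearity of expectation it suffices to establish two per-iteration claims: (i) for every fixed $i$ with $p_i > 0$, the conditional bias satisfies $|\E{\boldeta \mid \bi = i} - \log(1/p_i)| = O(\epsilon)$, and (ii) the unconditional variance satisfies $\Var{\boldeta} = O(\log^2 k)$. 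Averaging $n_0$ i.i.d.\ copies and applying Chebyshev's inequality then yields the claimed $\pm\epsilon$ estimate with probability $2/3$.

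For (i), I would condition on $\bi = i$ and set $\bY = \bX p_i/t$, so $\bY$ has mean $1$ and variance at most $1/t$. The key algebraic identity is
\[ \boldeta - \log(1/p_i) = \log\bY - g(\bB_1,\ldots,\bB_r). \]
Because $g$ is linear and $\E{\bB_j} = p_i^j$, taking expectations gives $\E{g(\bB_1,\ldots,\bB_r)} = g(p_i,\ldots,p_i^r) = h_t(p_i) = \E{f(\bY)}$ by the very definition of $h_t$. Thus the bias equals $\E{\log\bY - f(\bY)}$, the expected Taylor remainder of $\log$ about $1$ evaluated at $\bY$. Split over the event $\EE = \{|\bY - 1| \leq 1/2\}$ and its complement: on $\EE$, Lagrange's form of the remainder gives $|\log\bY - f(\bY)| = O(2^{-r}) = O(\epsilon)$; on $\EE^c$, a Bernstein-type bound for the negative binomial yields $\Pr{\EE^c} \leq \exp(-\Omega(t))$, which for $t = \Theta(\log^2(1/\epsilon))$ is small enough to dominate the worst-case magnitude of $|\log\bY - f(\bY)|$ in the tail (which grows at most polynomially in $1/p_i$ and $\bX/t$).

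For (ii), I would bound $\Var{\boldeta} \leq \E{\boldeta^2} \leq 2\E{\log^2(\bX/t)} + 2\E{g(\bB_1,\ldots,\bB_r)^2}$. Writing $\log(\bX/t) = \log(1/p_{\bi}) + \log\bY$,
\[ \E{\log^2(\bX/t)} \leq 2\E[\bi]{\log^2(1/p_{\bi})} + 2\E{\log^2\bY} = O(\log^2 k), \]
using the standard second-moment bound $\sum_i p_i \log^2(1/p_i) = O(\log^2 k)$ for distributions on $[k]$, together with concentration of $\bY$ near $1$ controlled via negative-binomial tail bounds. The term $\E{g(\bB_1,\ldots,\bB_r)^2}$ is $\poly(r)$ after bounding the Taylor coefficients of $h_t$, and is dominated. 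Averaging $n_0$ i.i.d.\ copies drives the variance to $O(\epsilon^2)$, and Chebyshev gives success probability $2/3$. Each invocation consumes $\bX + r$ stream items, with $\E[\bi,\bX]{\bX} = \E[\bi]{t/p_{\bi}} = tk$, so in expectation the algorithm uses $n_0(tk + r) = O(k\epsilon^{-2}\log^2 k \log^2(1/\epsilon))$ samples; a Chernoff argument on the per-iteration negative binomials converts this into the stated high-probability stream length. Memory is $O(1)$ words: one counter for $\bX$, one $\log r$-bit counter for $J := \max\{j : \bB_j = 1\}$ (since the $\bB_j$ form a prefix run), and a register for the running average.

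The hard part will be the tail of the bias analysis: Taylor's theorem provides no useful control of $|\log\bY - f(\bY)|$ when $\bY$ is far from $1$, so one must carefully pair the exponentially decaying tail probability $\exp(-\Omega(t))$ against a possibly large worst-case remainder, in particular ruling out pathologies such as $\bY$ being super-polynomially small or large. A secondary concern is representing $g$ in $O(1)$ machine words despite its coefficients depending on $t$ and $r$; this is handled by the bit-complexity analysis promised in Section~\ref{sec:bit-complexity}, where rounding to $O(\log(1/\epsilon))$ bits preserves all estimates.
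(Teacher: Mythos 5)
Your proposal follows the same overall plan as the paper: reduce to a per-iteration bias bound and variance bound for $\SimpleEstimator$, average $\Theta(\log^2 k/\epsilon^2)$ independent copies, apply Chebyshev, and use Fact~\ref{fact:sample-complexity} together with Markov's inequality for the stated stream length. Your bias identity $\E{\boldeta\mid\bi=i}-\log(1/p_i)=\E{\log\bY-f(\bY)}$ is exactly the paper's Claim~\ref{cl:exp-h-y}. Your variance argument, decomposing $\log(\bX/t)=\log(1/p_{\bi})+\log\bY$ and invoking $\sum_i p_i\log^2(1/p_i)=O(\log^2 k)$, is a legitimate alternative to the paper's proof of Lemma~\ref{lem:tail2}, which instead truncates $\bX$ at $\Xmax=\Theta(kt)$ and bounds $\E{\log^2(\bX/\bX')}$; both give $O(\log^2 k)$.

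One concrete slip in the bias analysis: on the event $\EE=\{|\bY-1|\leq 1/2\}$, the Lagrange form of the remainder gives only
\[
|\log z - f(z)| = \frac{|z-1|^{r+1}}{(r+1)\,\xi^{r+1}} \leq \frac{1}{r+1}
\]
in the worst case $z=\xi=1/2$, which is $O(1/\log(1/\epsilon))$, not $O(\epsilon)$. The $O(2^{-r})$ bound you want comes from summing the tail of the series $\sum_{\ell>r}(-1)^{\ell+1}(z-1)^{\ell}/\ell$ directly, which for $|z-1|\leq 1/2$ yields $\leq 2^{-r}/(r+1)$. The paper's Lemma~\ref{cl:h-small} works from the series and the factorization $|h(z)|=|z-1|^r\cdot|\log z|$, splitting into $|z-1|^{r+1}$, $(9/10)^r$, and $\ind\{z\leq 1/10\}\log(1/z)$, which keeps the three regimes cleanly separated. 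Your remark that the tail is the hard part is accurate: the paper spends Section~\ref{sec:verifyYSubgamma} establishing that $\bY$ is $(\alpha/\sqrt t,\beta/t)$-subgamma so that the moment bound~(\ref{eq:moment-bound}) controls $\E{|\bY-1|^{r+1}}$, and handles $\E{\ind\{\bY\leq 1/10\}\log(1/\bY)}$ by an explicit binomial-tail computation, using $\bX\geq t$ so this event forces $p_i\leq 1/10$. Those two calculations are precisely what your proposal defers, and they must be carried out; otherwise the plan matches the paper's proof.
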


The space complexity in the theorem above follows since computing the estimator just requires maintaining integers in the sets $[k], [t],$ and $[r]$, as well as computing a low-degree polynomial (whose coefficients we bound in Appendix~\ref{sec:bit-complexity}). To compute the average of multiple estimators in small space it suffices to compute the sum of the estimates where each estimator is computed in series. The sample complexity bound (given the specified parameters) in the above theorem follows directly from the sample complexity of \SimpleEstimator. By virtue of the fact our estimators are based on negative binomial distributions ($\bX$ in Figure \ref{fig:LogEstimatorDi} is the number of Bernoulli trials until $t$ successes), this in turn follows directly from the expectation of negative binomial distributions:

\begin{fact}[Expected Sample Complexity of \SimpleEstimator]\label{fact:sample-complexity}
Suppose we draw $\bi \sim \mathcal{D}$ and execute \emph{\SimpleEstimator}$(\mathcal{D}, \bi)$. Then, the expected sample complexity is 
\[ \sum_{i=1}^k p_i \left(r + \frac{t}{p_i} \right) = r + t k.\]
\end{fact}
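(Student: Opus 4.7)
The statement is a direct computation from the definition of \SimpleEstimator in Figure~\ref{fig:simple-estimator}, so the plan is to simply identify the two sources of samples used by the subroutine, compute the conditional expectation of each given $\bi$, and then take an outer expectation over $\bi\sim\mathcal{D}$.

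First I would condition on the outer sample $\bi=i$. In this case the subroutine consumes samples in exactly two places. In Step~\ref{en:line-1}, it draws samples until the symbol $i$ appears $t$ times; by definition of the negative binomial distribution this is $\bX\sim\NB(t,p_i)$, whose mean is $t/p_i$. In Step~\ref{en:line-2}, the subroutine deterministically draws $r$ additional samples (needed to compute the indicator counters $\bB_1,\ldots,\bB_r$). Hence the conditional expected number of samples is
\[
\E[]{\bX + r \,\big|\, \bi = i} \;=\; \frac{t}{p_i} + r.
\]

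Next I would average over $\bi\sim\mathcal{D}$, giving
\[
\E[\bi\sim\mathcal{D}]{\,r + \frac{t}{p_{\bi}}\,} \;=\; \sum_{i=1}^{k} p_i\!\left(r + \frac{t}{p_i}\right) \;=\; r\sum_{i=1}^{k} p_i \;+\; t\sum_{i=1}^{k} 1 \;=\; r + tk,
\]
using $\sum_i p_i = 1$ for the first term and the fact that the second sum has $k$ unit terms (symbols with $p_i=0$ are never drawn and so contribute $0$ to the expectation; equivalently we may restrict the sum to the support and replace $k$ by $|\mathrm{supp}(\mathcal{D})|\le k$, which only strengthens the bound).

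There is no real obstacle here: the only subtle point is to verify that the $r$ samples in Step~\ref{en:line-2} are independent of and taken in addition to the $\bX$ samples of Step~\ref{en:line-1}, so that the counts add rather than overlap, and that $\bX\sim\NB(t,p_i)$ really has expectation $t/p_i$ (a standard fact about negative binomial distributions). Both are immediate from the description of \SimpleEstimator, so the entire proof amounts to the two displays above.
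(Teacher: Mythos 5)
Your proof is correct and matches the paper's (implicit) argument: the paper states this Fact with only a one-sentence pointer to the expectation of the negative binomial distribution, and your writeup is precisely the calculation that pointer gestures at — condition on $\bi = i$, use $\E{\bX \mid \bi = i} = t/p_i$ plus the $r$ deterministic extra samples, then average over $\bi \sim \mathcal{D}$. The remark about symbols with $p_i = 0$ (so the $tk$ term is really $t \cdot |\mathrm{supp}(\mathcal{D})| \le tk$) is a fair observation, though the paper implicitly assumes full support.
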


Although the number of samples we draw is a random variable that is only bounded in expectation, note that it implies the existence of a good algorithm that always has a bounded sample complexity: namely, we can simply terminate the algorithm early and output \textsf{Fail} if it draws a large constant factor times more samples than we expect, which happens with low probability by Markov's inequality.

Before moving on to the showing the properties of the estimator, we verify that $h_t(\rho)$ is a degree-$r$ polynomial.
\begin{lemma}\label{sec:h-poly}
For any $r \in \N$, let $f \colon \R \to \R$ denote the degree-$r$ Taylor expansion of $\log(z)$ centered at $1$. Then, for any $\rho > 0$ and $t \in \N$, 
\[ h_t(\rho) = \E[\bZ \sim \NB(t, \rho)]{f\left( \frac{\bZ \cdot \rho}{t}\right)}\]
is a polynomial of degree at most $r$.
\end{lemma}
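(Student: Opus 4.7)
The plan is to write $f$ as a monomial polynomial and reduce the claim to a statement about moments of $\bZ \sim \NB(t,\rho)$. Since $f$ is a polynomial of degree at most $r$ in its argument, I can expand $f(z) = \sum_{\ell=0}^r c_\ell z^\ell$ for constants $c_\ell$ (depending only on $r$, coming from the Taylor coefficients of $\log$ at $1$). Then
\[ h_t(\rho) = \sum_{\ell=0}^r c_\ell \, t^{-\ell} \, \rho^\ell \, \E[\bZ \sim \NB(t,\rho)]{\bZ^\ell}, \]
so it suffices to show that for each $\ell \in \{0,1,\dots,r\}$, the quantity $\rho^\ell \E[\bZ \sim \NB(t,\rho)]{\bZ^\ell}$ is a polynomial in $\rho$ of degree at most $\ell$; summing $r+1$ such polynomials with the appropriate constants yields a polynomial in $\rho$ of degree at most $r$.

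To establish the moment claim, I would work through factorial moments via the probability generating function of $\bZ$. Using the Stirling-number identity $\bZ^\ell = \sum_{m=0}^\ell S(\ell,m)\,(\bZ)_m$, where $(\bZ)_m = \bZ(\bZ-1)\cdots(\bZ-m+1)$ denotes the falling factorial, I can rewrite
\[ \rho^\ell \, \E{\bZ^\ell} = \sum_{m=0}^\ell S(\ell,m)\,\rho^{\ell-m}\cdot \rho^m \E{(\bZ)_m}, \]
so it is enough to prove that $\rho^m \E{(\bZ)_m}$ is a polynomial in $\rho$ of degree at most $m$ for each $m$. The $m$-th factorial moment equals $\phi^{(m)}(1)$, where $\phi(s) = \rho^t s^t/(1-(1-\rho)s)^t$ is the PGF of $\bZ$. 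A short induction on $m$, starting from the base case $Q_0(s,q) = s^t$ and using the recurrence
\[ Q_{m+1}(s,q) = (1-qs)\,\partial_s Q_m(s,q) + (t+m)\,q\,Q_m(s,q), \]
shows that $\phi^{(m)}(s) = \rho^t Q_m(s,1-\rho)/(1-(1-\rho)s)^{t+m}$ for a two-variable polynomial $Q_m(s,q)$ satisfying $\deg_q Q_m \leq m$. Evaluating at $s=1$ and using $1-(1-\rho) = \rho$ gives $\rho^m \E{(\bZ)_m} = Q_m(1,1-\rho)$, which is a polynomial in $\rho$ of degree at most $m$.

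The only real bookkeeping point is the degree bound $\deg_q Q_m \leq m$ in the inductive step: each of $(1-qs)\partial_s Q_m$ and $(t+m)q Q_m$ raises the $q$-degree by at most one, and the base case has $q$-degree zero, which closes the induction. Alternatively, one could bypass Stirling numbers by decomposing $\bZ = \bG_1 + \cdots + \bG_t$ as a sum of i.i.d.\ $\geo(\rho)$ variables and expanding $\bZ^\ell$ multinomially, since an elementary computation gives $\rho^m \E{\bG_i^m}$ as a polynomial in $\rho$ of degree at most $m$; the PGF route, however, keeps the induction especially clean and avoids combinatorial identities.
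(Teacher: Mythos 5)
Your proof is correct, but it takes a genuinely different route than the paper's. The paper expands $f$ in powers of $(z-1)$, decomposes $\bZ = \bG_1 + \cdots + \bG_t$ as a sum of i.i.d.\ $\Geo(\rho)$ variables, and reduces the claim to showing that $\rho^j \E[\bG\sim\Geo(\rho)]{\bG^j}$ is a degree-$j$ polynomial in $\rho$; it then invokes a structural fact about the polylogarithm $\Li_{-j}$, namely that $\Li_{-j}(1-\rho)$ is a rational function with denominator $\rho^{j+1}$ and a degree-$j$ numerator. You instead expand $f$ in monomials, convert $\bZ^\ell$ to falling factorials via Stirling numbers, and track the $q$-degree (with $q=1-\rho$) of the numerator of the successive PGF derivatives by induction. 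Both arguments are sound. The paper's route is shorter provided one accepts the polylogarithm rationality fact; yours is more self-contained, relying only on elementary calculus, the Stirling-number identity $\bZ^\ell = \sum_m S(\ell,m)(\bZ)_m$, and the standard fact that $\phi^{(m)}(1) = \E{(\bZ)_m}$ (which holds here because $\rho>0$ makes the PGF analytic in a neighborhood of $s=1$). Your closing remark about bypassing Stirling numbers via the geometric decomposition and computing $\rho^m \E{\bG^m}$ directly is, in fact, essentially what the paper does.
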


\newcommand{\Geo}{\mathrm{Geo}}
\newcommand{\Li}{\mathrm{Li}}

\begin{proof}
Recall that the random variable $\bZ \sim \NB(t, \rho)$ is the number of independent trials from a $\Ber(\rho)$ distribution before one sees $t$ successes. Furthermore, $f$ is the degree-$r$ Taylor expansion of $\log z$ centered at $1$, and 
\[ f(z) = \sum_{i=1}^r \dfrac{(-1)^{i+1}}{i} \cdot (z-1)^r. \]
By linearity of expectation, it suffices to show that for every $j \in \{ 1, \dots, r\}$, $\E[\bZ]{ (\bZ\rho / t - 1)^j}$ is a degree-$j$ polynomial in $\rho$. Note that $\bZ$ is a sum of $t$ independent $\Geo(\rho)$ random variables, so by expanding $( \frac{1}{t}\sum_{i=1}^t \bG_i \rho - 1)^j$ and applying linearity of expectation once more, it suffices to show that
\[ \E[\bG \sim \Geo(\rho)]{ \left(\bG \cdot \rho\right)^j} = \rho^j 
\E[\bG \sim \Geo(\rho)]{ \bG^j } = \rho^j \sum_{k=1}^{\infty} \rho (1-\rho)^{k-1} k^j\]
is a degree-$j$ polynomial in $\rho$. We note that this latter term, $\E[\bG]{\bG^j}$ may be expressed as $\rho \cdot \Li_{-j}(1-\rho)$, where $\Li_{-j}(\cdot)$ is the polylogarithm function (see~\cite{wikipolylogarithm}). $\Li_{-j}(1-\rho)$ happens to be a rational function, where the denominator is exactly $\rho^{j+1}$, which cancels the $\rho^{j+1}$ term   . In addition, the numerator of $\Li_{-j}(1-\rho)$ is a degree-$j$ polynomial in $\rho$, which gives the desired polynomial representation.
\end{proof}

Finally, it will be useful for the variance calculation to show that the correction term is always bounded, which we show here.

\begin{lemma}\label{lem:bound-on-g}
There exists a universal constant $c > 0$ such that, for any $r, t \in \N$, if we let $g \colon [0,1]^r \to \R$ be the linear function where $g(\rho, \rho^2,\dots, \rho^r) = h_t(\rho)$, then $g(b) \in [-c, c]$ for all $b \in \{0,1\}^r$.
\end{lemma}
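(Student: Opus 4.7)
The strategy is to make $g$ explicit and reduce the problem to a coefficient-norm bound on $h_t$. Since the monomials $1,\rho,\rho^2,\ldots,\rho^r$ are linearly independent over $\R$, writing $h_t(\rho) = \sum_{j=0}^{r} a_j \rho^j$ uniquely determines the coefficients $\{a_j\}$, and the (affine) linear function $g$ satisfying $g(\rho,\rho^2,\ldots,\rho^r) = h_t(\rho)$ must equal $g(x_1,\ldots,x_r) = a_0 + \sum_{j=1}^{r} a_j x_j$. For any $b \in \{0,1\}^r$, this immediately gives
\[
|g(b)| \;\le\; |a_0| + \sum_{j=1}^{r} |a_j|\,b_j \;\le\; \sum_{j=0}^{r}|a_j| \;=:\; \|h_t\|_1,
\]
so it suffices to prove $\|h_t\|_1 \le c$ for a universal constant $c$, independently of $r$ and $t$.

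To bound $\|h_t\|_1$, I would use the decomposition already invoked in the proof of Lemma~\ref{sec:h-poly}. With $W = \bZ\rho/t$ and $\E{W-1}=0$, we have $h_t(\rho) = \sum_{j=2}^{r} \tfrac{(-1)^{j+1}}{j}\,M_j(\rho)$, where $M_j(\rho) := \rho^j\,\E[\bZ \sim \NB(t,\rho)]{(\bZ/t - 1/\rho)^j}$ is a polynomial in $\rho$ of degree at most $j$. The triangle inequality for coefficient norms then gives $\|h_t\|_1 \le \sum_{j=2}^{r} \tfrac{1}{j}\|M_j\|_1$, so the goal reduces to a uniform estimate of the form $\|M_j\|_1 \le C^j / t^{\lceil j/2\rceil}$ for a universal $C$: combined with the $1/j$ Taylor coefficient, this yields a geometrically convergent series bounded by a universal constant.

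The bound on $\|M_j\|_1$ would follow by writing $\bZ/t - 1/\rho = \tfrac{1}{t}\sum_{i=1}^{t}Y_i$ with $Y_i := \bG_i - 1/\rho$ i.i.d.\ and $\bG_i \sim \Geo(\rho)$, and applying the multinomial theorem: since $\E{Y_i}=0$, only multi-indices $(j_1,\ldots,j_t)$ with each nonzero part $\ge 2$ survive, so the number of nonzero indices is at most $\lfloor j/2\rfloor$, and combined with the $(1/t)^j$ prefactor this produces the $t^{-\lceil j/2\rceil}$ decay. Each factor $\rho^{j_i}\E{Y_i^{j_i}}$ is a polynomial in $\rho$ whose coefficient $L_1$-norm can be bounded via the polylogarithm/Eulerian-type identities for $\rho\,\Li_{-j}(1-\rho)$ invoked in Lemma~\ref{sec:h-poly}, and submultiplicativity $\|PQ\|_1 \le \|P\|_1\|Q\|_1$ controls the product. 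The main obstacle is that the individual moments of $Y_i$ grow factorially in $j$ (reflecting the heavy tail of the geometric distribution), so the bound must exploit cancellation carefully: the $t^{-\lceil j/2\rceil}$ concentration factor, the $1/j$ Taylor coefficient, and the combinatorial structure of the multinomial coefficients must together dominate this factorial growth, after which the series $\sum_j \tfrac{1}{j}\|M_j\|_1$ converges absolutely to a universal constant.
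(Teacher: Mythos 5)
Your reduction to bounding $\|h_t\|_1:=\sum_{j=0}^r|a_j|$ (writing $h_t(\rho)=\sum_j a_j\rho^j$) is correct and matches the paper, as does the decomposition $h_t(\rho)=\sum_{j\ge 2}\frac{(-1)^{j+1}}{j}M_j(\rho)$ with $M_j(\rho)=\E[\bZ\sim\NB(t,\rho)]{(\bZ\rho/t-1)^j}$. The gap is the asserted target $\|M_j\|_1\le C^j/t^{\lceil j/2\rceil}$: your own multinomial outline does not yield it, and the missing factor is not a technicality. Carrying out your sketch, a surviving multi-index with nonzero parts $j_{i_1},\dots,j_{i_s}\ge 2$ contributes $t^{-j}\binom{j}{j_1,\dots,j_t}\prod_a\|\rho^{j_{i_a}}\E{Y_i^{j_{i_a}}}\|_1$, and each factor $\|\rho^{m}\E{Y_i^{m}}\|_1$ is of order $C^m m!$ (the Eulerian polynomial $A_m(1-\rho)$ has coefficient $\ell_1$-norm of order $2^m m!$), so the product of norms is $\lsim C^j\prod_i j_i!$; multiplying by $\binom{j}{j_1,\dots,j_t}=j!/\prod_i j_i!$ cancels the $\prod_i j_i!$ and leaves a bare $j!$. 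After the $t^{-j}$ prefactor and summing over the $\lsim 2^j t^{\lfloor j/2\rfloor}$ admissible multi-indices, one obtains $\|M_j\|_1\lsim j!\,C^j/t^{j/2}=(O(j/\sqrt t))^j$, not $C^j/t^{\lceil j/2\rceil}$. Your final paragraph names the factorial growth as the crux but then asserts that the multinomial combinatorics ``together dominate'' it; they do not --- the multinomial coefficient is precisely what produces the $j!$.

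With the corrected estimate $\|M_j\|_1=(O(j/\sqrt t))^j$, the series $\sum_{j\ge 2}\frac1j\|M_j\|_1$ is bounded by a constant only when $r$ is a sufficiently small multiple of $\sqrt t$, a hypothesis the parameter choices $r=\Theta(\log(1/\eps))$, $t=\Theta(\log^2(1/\eps))$ arrange and which the paper's proof also uses (the lemma's ``for any $r,t$'' is only invoked in that regime). The paper reaches the same estimate much more cheaply: it bounds $\sup_{\rho\in(0,1]}|M_j(\rho)|\le(O(j/\sqrt t))^j$ directly from the subgamma moment bound~(\ref{eq:moment-bound}) already proved for the bias analysis, and then invokes Lemma~4.1 of \cite{Sherstov13}, which converts a uniform bound on $[0,1]$ for a degree-$j$ polynomial into a per-coefficient bound of the same order up to a $C^j$ factor. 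That single conversion lemma replaces your entire Eulerian/multinomial bookkeeping and is the step your proposal is missing.
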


\begin{proof} 
Recall $g \colon [0,1]^r \to \R$ is the linear function where $g(\rho, \rho^2, \dots, \rho^r) = h_t(\rho)$. Hence, in order to show that $g \colon \{0,1\}^r \to \R$ is bounded, it suffices to show that the sum-of-magnitudes of the $r+1$ coefficients of $h_t$ is bounded. Since we have
\begin{align*}
h_t(\rho) &= \E[\bZ \sim \NB(t, \rho)]{ f\left( \frac{\bZ \cdot \rho}{t}\right)} = \sum_{i=1}^{r} \dfrac{(-1)^{i+1}}{i} \cdot \E[\bZ \sim \NB(t, \rho)]{ \left(\frac{\bZ \cdot \rho}{t} - 1 \right)^i }
\,.
\end{align*}
Notice that in Lemma~\ref{sec:h-poly}, we showed that each $\E[\bZ]{ (\bZ \rho / t - 1)^i}$ is a degree-$i$ polynomial in $\rho$, and the bound (\ref{eq:moment-bound}) implies that, for each $i \in \{1,\dots, r\}$ these polynomials are at most $\left(O( i / \sqrt{t} ) \right)^i$ in magnitude. Furthermore, since these are degree-$i$ polynomials bounded in $[0,1]$, we conclude (by Lemma 4.1 in \cite{Sherstov13}), that the coefficients in $\E[\bZ]{ \left( \bZ \rho / t - 1\right)^{i}}$ are at most $\left( O(i / \sqrt{t}) \right)^i$. 
In particular, we have that the $r$ coefficients of $h_t(\rho)$ are at most 
\[ \sum_{i=1}^r \frac{1}{i} \cdot \left( O(i / \sqrt{t}) \right)^{i}
\leq \sum_{i=1}^r \left( O(i / \sqrt{t}) \right)^{i}
= O(1 / \sqrt{t})\, \]
because $r \ll \sqrt t$.
To show that $g \colon \{0,1\}^r \to \R$ is bounded, we add the magnitudes of the $r$ coefficients, which is $O(r / \sqrt{t}) = O(1)$ when $r = O(\log(1/\eps))$ and $t = O(\log^2(1/\eps))$.
\end{proof}

\subsection{Bounding Bias of Estimator} 
\begin{lemma}\label{lem:main-lemma}
Let $\mathcal{D}$ be any distribution and consider any $i \in [k]$. If, for $\eps \in (0, 1)$, we instantiate $\emph{\SimpleEstimator}(\mathcal{D}, i)$ with $r = \Theta(\log(1/\eps))$ and $t = \Theta(\log^2(1/\eps))$, which produces the random variable $\boldeta$, then
\begin{align*}
\left|\E{\boldeta} - \log\left(\frac{1}{p_i}\right) \right| \leq \eps.
\end{align*}
\end{lemma}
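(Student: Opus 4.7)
The plan is to first reduce the claim to bounding the expected value of a Taylor remainder. Since the $r$ additional samples used to form $\bB_1,\ldots,\bB_r$ are drawn independently of $\bX$, and $\bB_j$ is the indicator that the first $j$ of them equal $i$, we have $\E{\bB_j} = p_i^j$. Linearity of $g$ therefore yields
\[ \E{g(\bB_1,\ldots,\bB_r)} \;=\; g(p_i, p_i^2, \ldots, p_i^r) \;=\; h_t(p_i) \;=\; \E{f(\bX p_i/t)}, \]
where the last equality is the definition of $h_t$. Setting $\bY := \bX p_i/t$ (which satisfies $\E{\bY}=1$), we have $\log(\bX/t) = \log(1/p_i) + \log \bY$, so
\[ \E{\boldeta} - \log(1/p_i) \;=\; \E{\log \bY - f(\bY)}, \]
and the task reduces to bounding $|\E{\log \bY - f(\bY)}|$ by $\eps$.

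Next I would split on the event $E := \{\,|\bY-1|\le 1/2\,\}$. On $E$, the Lagrange form of Taylor's theorem applied to $\log z$ centered at $1$ gives, for some $\xi$ between $1$ and $\bY$,
\[ |\log \bY - f(\bY)| \;=\; \frac{|\bY-1|^{r+1}}{(r+1)\,\xi^{r+1}} \;\le\; \frac{(2|\bY-1|)^{r+1}}{r+1}. \]
Taking expectations and invoking the moment bound~(\ref{eq:moment-bound}) cited in Lemma~\ref{lem:bound-on-g}, namely $\E{|\bY-1|^{r+1}} \le (C(r+1)/\sqrt{t})^{r+1}$ for an absolute constant $C$, yields
\[ \E{|\log \bY - f(\bY)|\cdot \ind_E} \;\le\; \frac{1}{r+1}\!\left(\frac{2C(r+1)}{\sqrt{t}}\right)^{\!r+1}. \]
Choosing the constants in $r=\Theta(\log(1/\eps))$ and $t=\Theta(\log^2(1/\eps))$ so that $2C(r+1)/\sqrt{t}\le 1/2$ and $r+1\ge \log_2(2/\eps)$ makes this quantity at most $\eps/2$.

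For the tail event $E^c$, I view $\bX$ as the number of $\Ber(p_i)$ trials until $t$ successes and use the equivalence $\{\bX\le m\}\Leftrightarrow\{\Binomial(m,p_i)\ge t\}$. A standard Chernoff bound then yields $\Pr{|\bY-1|>1/2} \le (e/4)^{t/2} = 2^{-\Omega(t)}$, independent of $p_i$. On $\{\bY<1/2\}$ I would use the deterministic lower bound $\bY\ge p_i$ (which holds because $\bX\ge t$ always) to get $|\log \bY|\le \log(1/p_i)$, while $|f(\bY)|=O(\log r)$ since $|\bY-1|\le 1$; on $\{\bY>3/2\}$ I would apply Cauchy--Schwarz together with the inequalities $\log \bY \le \bY-1$ and a crude second-moment estimate on $f(\bY)$ obtained by expanding $f(\bY)=\sum_{j=1}^r \frac{(-1)^{j+1}}{j}(\bY-1)^j$ termwise and reusing the moment bound.

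The main obstacle is controlling the $\log(1/p_i)$ factor coming from the $\{\bY<1/2\}$ contribution: the resulting bound is $O(\log(1/p_i))\cdot 2^{-\Omega(t)}$, which must be pushed below $\eps/2$ even though the lemma is stated uniformly over $p_i>0$. The reason $t$ is taken as large as $\Theta(\log^2(1/\eps))$ is precisely to make this factor $2^{-\Omega(\log^2(1/\eps))}$, doubly-small in $\eps$, so that it comfortably dominates $\log(1/p_i)$ for any $p_i$ the algorithm realistically encounters (concretely, any $p_i\ge 2^{-2^{\polylog(1/\eps)}}$; smaller $p_i$ are truncated out via the Markov-based early termination following Fact~\ref{fact:sample-complexity}). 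Combining the good-event and tail bounds then yields $|\E{\boldeta}-\log(1/p_i)|\le \eps$.
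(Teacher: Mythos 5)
Your reduction to $\E{\log \bY - f(\bY)}$ and your good-event analysis via the Lagrange remainder and the moment bound are correct, and they parallel the paper's Claim~\ref{cl:exp-h-y} and the first part of Lemma~\ref{cl:h-small}. The gap is in your handling of the lower tail $\{\bY < 1/2\}$. You bound the contribution by pulling out a worst-case factor $\log(1/p_i)$ (valid because $\bY \ge p_i$ deterministically) and multiplying by $\Pr{\bY < 1/2} \le 2^{-\Omega(t)}$, obtaining $O(\log(1/p_i)) \cdot 2^{-\Omega(t)}$. As you note, this is not uniform in $p_i$, and the lemma is stated for every $p_i > 0$; the appeal to an algorithmic truncation of ``unrealistically small'' $p_i$ via Markov and early termination does not help, because Lemma~\ref{lem:main-lemma} is a pure bias statement about a single call to \SimpleEstimator{}, with no such truncation present. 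The workaround would change what is being proved.

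The bound $\log(1/p_i)\cdot 2^{-\Omega(t)}$ is also genuinely lossy: conditioned on $\{\bY < 1/2\}$, the typical value of $\log(1/\bY)$ is $O(1)$, not $\log(1/p_i)$, and the extreme event $\bY \approx p_i$ (i.e.\ $\bX = t$) carries probability only $p_i^t$. The paper avoids the decoupling entirely. It upper-bounds $\log(1/\bY)$ by $1/\bY$ (valid since $\bY \leq 1/10$) and then evaluates $\E{\ind\{\bY \le 1/10\}/\bY}$ as an explicit sum over $\ell \in [t, t/(10p_i)]$ using the negative binomial pmf; the number of terms $\approx t/(10p_i)$ contributes a $1/p_i$ factor which is exactly cancelled by the $p_i^{t-2}$ that survives in each summand, because the constraint $\ell \le t/(10p_i)$ forces $p_i(\ell-1) \lesssim t/10$. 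The result is $\exp(-\Omega(t))$ uniformly in $p_i$. To repair your proof you would need to replace the ``probability times worst case'' bound by this joint computation (or a layer-cake argument over thresholds $\{\bY < 2^{-j}\}$), which is in effect what the paper does.
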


The remainder of the section constitutes the proof of Lemma~\ref{lem:main-lemma}, which will follow from a sequence of claims.

\begin{claim}\label{cl:exp-h-y}
In an execution of $\emph{\SimpleEstimator}(\mathcal{D}, i)$, let $\bX$ and $\boldeta$ be defined as in Line~\ref{en:line-1} and Line~\ref{en:line-3} of Figure~\ref{fig:simple-estimator}, and let $\bY = \bX \cdot p_{i} / t$. Then,
\[ \E{\boldeta} - \log\left(\frac{1}{p_i}\right) = \E[\bX]{h(\bY)},\]
where $h(z)$ is the error in the degree-$r$ Taylor expansion of $\log z$ at $1$.
\end{claim}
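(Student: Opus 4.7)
\medskip
\noindent\textbf{Proof proposal for Claim~\ref{cl:exp-h-y}.}
The plan is to expand $\boldeta$ into pieces that can be handled separately, and then verify that the Taylor-polynomial part of $\log(\bY)$ exactly cancels against $\E{g(\bB_1,\dots,\bB_r)}$, leaving only the Taylor remainder $h(\bY)$. The argument is really just definition-chasing; the only thing to be careful about is which variable plays the role of $\rho$ in $h_t$ and in $g$.

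First I would write $\bY = \bX p_i / t$ and note that $\log(\bX/t) = \log(\bY) + \log(1/p_i)$, so that, using the definition of $\boldeta$ from Line~\ref{en:line-3}:
\begin{equation*}
\boldeta = \log(\bY) + \log\!\left(\tfrac{1}{p_i}\right) - g(\bB_1, \bB_2, \dots, \bB_r).
\end{equation*}
Next, since $f$ is the degree-$r$ Taylor expansion of $\log z$ at $1$ and $h(z) = \log z - f(z)$ denotes the remainder, I would split $\log(\bY) = f(\bY) + h(\bY)$ and take expectations to obtain
\begin{equation*}
\E{\boldeta} - \log\!\left(\tfrac{1}{p_i}\right) = \E{f(\bY)} + \E{h(\bY)} - \E{g(\bB_1, \dots, \bB_r)}.
\end{equation*}

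The remaining step is to show that $\E{f(\bY)}$ and $\E{g(\bB_1,\dots,\bB_r)}$ are both equal to $h_t(p_i)$, so that they cancel. For the first, conditioning on $i$ and using the fact from Line~\ref{en:line-1} that $\bX \sim \NB(t, p_i)$, by the very definition of $h_t$ in Line~\ref{en:line-2} I get
\begin{equation*}
\E{f(\bY)} = \E[\bX \sim \NB(t, p_i)]{f\!\left(\tfrac{\bX \cdot p_i}{t}\right)} = h_t(p_i).
\end{equation*}
For the second, since the $r$ additional samples in Line~\ref{en:line-2} are independent and each equals $i$ with probability $p_i$, the indicator $\bB_j$ (that the first $j$ of them are all $i$) satisfies $\E{\bB_j} = p_i^j$. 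Because $g$ is linear by construction and $g(\rho, \rho^2, \dots, \rho^r) = h_t(\rho)$, linearity of expectation gives
\begin{equation*}
\E{g(\bB_1, \dots, \bB_r)} = g(\E{\bB_1}, \dots, \E{\bB_r}) = g(p_i, p_i^2, \dots, p_i^r) = h_t(p_i).
\end{equation*}

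Substituting back into the earlier display cancels the two $h_t(p_i)$ terms and leaves $\E{\boldeta} - \log(1/p_i) = \E{h(\bY)}$, which is the claim. I do not expect any genuine obstacle here: the only subtle point is that $g$ is defined as a linear function of $(\rho, \rho^2, \dots, \rho^r)$ (not of $\rho$ alone), which is exactly what makes $\E{g(\bB_1,\dots,\bB_r)} = g(p_i,\dots,p_i^r)$ work without requiring any higher moments of the $\bB_j$'s.
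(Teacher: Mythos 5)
Your proof is correct and follows essentially the same route as the paper: rewrite $\log(\bX/t)$ as $\log(\bY)+\log(1/p_i)$, split $\log(\bY)=f(\bY)+h(\bY)$, and use the linearity of $g$ together with $\E{\bB_j}=p_i^j$ to show that $\E{f(\bY)}$ and $\E{g(\bB_1,\dots,\bB_r)}$ both equal $h_t(p_i)$ and cancel. The only cosmetic difference is that you name the common value $h_t(p_i)$ explicitly, which the paper leaves implicit.
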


\begin{proof}[Proof of Claim~\ref{cl:exp-h-y}]
Notice that $\bX$ is the number of trials from $\Ber(p_i)$ until we see $t$ successes. We now have the following string of equalities:
\begin{align*}
\E[\bX, \bB_1,\dots, \bB_{r}]{ \boldeta - \log\left(\frac{1}{p_i}\right)}
&= 
\E[\bX]{ \log \bY } - 
\E[\bB_1,\dots, \bB_{r}]{ g\left( \bB_1,  \bB_2, \dots, \bB_{r}\right)}
\\
		\E[\bX]{f(\bY) + h(\bY)}
		- g( p_i, p_i^2,\dots, p_i^{r} )= 
		\E[\bX]{h(\bY)}
		\,,
\end{align*}
where we used the fact that $g$ is a linear function, and that $\E{\bB_{\ell}} = p^{\ell}_i$ in order to substitute
\[ \E[\bB_1,\dots, \bB_r]{g(\bB_1,\dots,\bB_r)} = g(p_i, p_i^2,\dots, p_i^r). \]
Furthermore, we divide $\log \bY = f(\bY) + h(\bY)$, where $f(z)$ is the degree-$r$ Taylor expansion of $\log z$ at $1$, and  $h(z) = \log z - f(z)$ is the error in the degree-$r$ Taylor expansion of $\log(z)$, i.e.,
\[ h(z) = \log(z) - f(z) = \sum_{\ell=r+1}^{\infty}(-1)^{\ell+1} \cdot \dfrac{(z-1)^{\ell}}{\ell}. \]
Finally, by construction of $g$, $\E{f(\bY)} = g(p_i,p_i^2, \dots, p_i^r)$, which gives the desired equality.
\end{proof}

\begin{lemma}\label{cl:h-small}
For any $\eps \in (0, 1)$, letting $r = \Theta(\log(1/\eps)$ and $t = \Theta(\log^2(1/\eps))$, we have that for $p_i > 0$,
\[ \left| \E[\bX]{h(\bY)} \right| \leq \eps. \]
\end{lemma}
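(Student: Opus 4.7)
The plan is to split $\E{h(\bY)}$ based on whether $\bY$ is close to $1$: let $\GG = \{|\bY - 1| \le 1/2\}$ and write $\E{h(\bY)} = \E{h(\bY)\ind_{\GG}} + \E{h(\bY)\ind_{\bar\GG}}$. On $\GG$ the Mercator series for $\log \bY$ centered at $1$ converges absolutely, so
\[ h(\bY)\,\ind_{\GG} \;=\; \sum_{\ell > r} \frac{(-1)^{\ell+1}}{\ell}(\bY - 1)^{\ell}\,\ind_{\GG}, \]
which is dominated in magnitude by $\tfrac{2}{r+1}|\bY - 1|^{r+1}$ via a geometric-series estimate (since $|\bY-1| \le 1/2$ on $\GG$). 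Taking expectations and invoking the moment estimate (\ref{eq:moment-bound}) used in Lemma~\ref{lem:bound-on-g}, which gives $\E{|\bY-1|^{r+1}} \le (C(r+1)/\sqrt t)^{r+1}$ (for odd exponents upgrading the signed bound to an absolute one by a Cauchy--Schwarz step between consecutive even moments), yields $|\E{h(\bY)\ind_\GG}| \le (C(r+1)/\sqrt t)^{r+1}/(r+1)$. Choosing the hidden constants in $r = \Theta(\log(1/\eps))$ and $t = \Theta(\log^2(1/\eps))$ so that $C(r+1)/\sqrt t \le 1/2$, this is at most $(1/2)^{r+1}/(r+1) \le \eps/2$.

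For $\bar\GG$, I would combine (i) a very small probability bound $\Pr{\bar\GG} \le e^{-\Omega(t)}$, obtained from a standard Chernoff bound for $\bX \sim \NB(t, p_i)$ (equivalently, Markov applied to a high moment in (\ref{eq:moment-bound})), which for $t = \Theta(\log^2(1/\eps))$ is super-polynomially small in $\eps$; and (ii) a crude pointwise bound on $|h(\bY)|$: writing $|h(\bY)| \le |\log \bY| + |f(\bY)|$, one has $|f(\bY)| = O(\bY^{r})$ by the coefficient bounds from Lemma~\ref{lem:bound-on-g}, while $|\log \bY|$ is at most $\log(1/p_i)$ on $\{\bY \le 1\}$ (using $\bX \ge t$, so $\bY \ge p_i$) and at most $\bY$ on $\{\bY \ge 1\}$. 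A Cauchy--Schwarz step then bounds $|\E{h(\bY)\ind_{\bar\GG}}| \le \sqrt{\E{h(\bY)^2}\,\Pr{\bar\GG}}$, with the polynomial moments of $\bY$ absorbed by the exponential-in-$t$ decay of $\Pr{\bar\GG}$, giving a bound of $\eps/2$.

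The main obstacle is the $\log(1/p_i)$ factor appearing in the bound for $|\log \bY|$ on $\{\bY \le 1\}$: it is unbounded as $p_i \to 0$, so one must check that the $e^{-\Omega(\log^2(1/\eps))}$ decay of $\Pr{\bar\GG}$ suffices to dominate it uniformly in $p_i$. A cleaner alternative, if this uniform domination feels delicate, is to use the Lagrange form of the remainder, $|h(z)| \le |z-1|^{r+1}/((r+1)\min(1,z)^{r+1})$, which reduces the analysis to moment bounds for both $\bY$ and $1/\bY = t/(p_i \bX)$; a cumulant expansion analogous to the one underlying (\ref{eq:moment-bound}) yields the required moment bound on $1/\bY$, producing a uniform-in-$p_i$ estimate that avoids splitting into good/bad events at all.
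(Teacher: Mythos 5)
Your main (Cauchy--Schwarz) route has a real gap, and it is exactly the one you flag at the end. The decay of $\Pr{|\bY-1|>1/2}$ coming from the subgamma property of $\bY$ is $e^{-\Omega(t)}=e^{-\Omega(\log^2(1/\eps))}$, which is \emph{uniform in $p_i$}: it does not improve as $p_i\to0$. Meanwhile the pointwise bound $|\log\bY|\le\log(1/p_i)$ on $\{\bY<1\}$ is \emph{unbounded in $p_i$}. So the product you get from Cauchy--Schwarz is of the order $\log(1/p_i)\cdot e^{-\Omega(t)}$, which cannot be made $\le\eps/2$ uniformly over the distribution $\mathcal D$; for each fixed $\eps$ there are $p_i$ so small that the bound is meaningless. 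The missing ingredient is that the negative binomial puts mass only $O(p_i^{t-1})$ near $\bX=t$, which is precisely what cancels the $1/p_i$ blowup. The paper's proof makes this cancellation explicit: it upper bounds $\log(1/\bY)$ by $1/\bY$, writes $\E[\bX]{\ind\{\bY\le1/10\}/\bY}$ as a sum over the NB pmf over $\ell\in[t, t/(10p_i)]$, and shows term-by-term that the $p_i^{t}$ factor in $\binom{\ell-1}{t-1}p_i^t(1-p_i)^{\ell-t}$ dominates the $t/(\ell p_i)$ factor, yielding $e^{-\Omega(t)}$ uniformly. An event-probability bound alone cannot see this, because it throws away the size of $1/\bY$ conditioned on the bad event.

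Your alternative (Lagrange remainder, $|h(z)|\le|z-1|^{r+1}/((r+1)\min(1,z)^{r+1})$) is the right instinct and does give a clean, uniform-in-$p_i$ argument, but it is not yet a proof: you need a moment bound on $1/\bY-1$, and this is not a consequence of the subgamma property of $\bY$ stated in the paper. It does follow from a separate tail bound: since $\bX\ge t$ deterministically, $1/\bY\le 1/p_i$; and $\Pr{1/\bY>M}=\Pr{\bX<t/(Mp_i)}\le\Pr{\mathrm{Bin}(\lceil t/(Mp_i)\rceil,p_i)\ge t}\le(e/M)^t$ by the multiplicative Chernoff bound, which gives $\E{(1/\bY-1)^{r+1}}=(O(1))^{r+1}$ whenever $r+1<t$ (and your parameter settings ensure $r\ll t$). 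With that lemma in place, your alternative would go through and is arguably no harder than the paper's explicit sum; but as written, the proposal asserts rather than establishes this moment bound, so it does not yet close the gap it correctly identifies.

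For the good-event part of your main approach, the argument is essentially the paper's: both use the $(r+1)$-st moment bound (\ref{eq:moment-bound}) together with the choice $t=\Theta(r^2)$ to make $\E{|\bY-1|^{r+1}}\ll\eps$. Your extra Cauchy--Schwarz step between even moments to control odd absolute moments is unnecessary --- (\ref{eq:moment-bound}) as derived from the subgamma MGF already controls $\E{|\bY-1|^j}$ for all $j$, odd or even --- but this is harmless.
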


\begin{proof}[Proof of Lemma~\ref{cl:h-small}]
We note that, for any $z \in \R_{> 0}$,
\begin{align}
|h(z)| &= \left| (z-1)^{r} \sum_{\ell=1}^{\infty} (-1)^{\ell} \cdot \dfrac{(z-1)^{\ell}}{r + \ell} \right| \leq |z-1|^r \cdot \left| \sum_{\ell=1}^{\infty} (-1)^{\ell} \cdot \dfrac{(z-1)^{\ell}}{\ell} \right| \nonumber \\
	&= |z-1|^r \cdot |\log z| \lsim |z-1|^{r+1} + \left(\frac{9}{10}\right)^r + \ind\left\{ z \leq 1/10 \right\} \cdot \log(1/z). \label{eq:ub}
\end{align}
Hence, we have 
\[ \left| \E[\bX]{h(\bY)} \right| \leq \E[\bX]{\left| h(\bY)\right|} \leq \E[\bX]{ \left|\bY - 1\right|^{r+1}} + \eps / 2 + \E[\bX]{ \ind\left\{\bY \leq 1/10\right\} \log(1/\bY) },\]
where we used the fact that $\bY > 0$ and $r = \Theta(\log(1/\eps))$ to say $(9/10)^r < \eps/2$. In order to bound the above two quantities, we use the fact that the random variable $\bY$ is a subgamma random variable and thus has good concentration around its mean (which is $1$ for the case of $\bY$), giving the desired inequality. 

\begin{definition}[Subgamma Random Variable]
For $\sigma, B \in \R$, a random variable $\bZ$ with expectation $\mu$ is $(\sigma, B)$-subgamma if for all $\lambda \in \R$ with $|\lambda| < 1/|B|$, 
\[ \psi_{\bZ}(\lambda) \eqdef \ln \left( \E{e^{\lambda (\bZ - \mu)} } \right) \leq \dfrac{\lambda^2 \sigma^2}{2(1 - \lambda |B|)}. \]
\end{definition}

It is not hard to verify (see Section~\ref{sec:verifyYSubgamma}) that the random variable $\bY$ is centered at $1$, and that there are constants $\alpha, \beta \in \R_{\geq 0}$ so $\bY$ is $(\alpha/\sqrt{t}, \beta / t)$-subgamma. Then, by taking the Taylor expansion of $\E{e^{\lambda(Y - 1)}}$, we have that for any $|\lambda| < t / \beta$, and any $j \in \N$,
\begin{align}
\E[\bX]{ \left|\bY - 1\right|^{j}} &\leq \dfrac{j!}{\lambda^{j}} \cdot \exp\left( \dfrac{\alpha^2 \lambda^2}{2t (1 - \lambda \beta / t)}\right) \leq \dfrac{\alpha^{j} j!}{t^{j/2}} \cdot e^{3}, \label{eq:moment-bound}
\end{align}
by picking $\lambda = \sqrt{t} / \alpha$, which is less than $t / \beta$ for large enough $t$. Letting $j = r+1$ and setting $t = O(r^2)$, we get the desired bound of $o(\eps)$. In order to bound $\E[\bX]{\ind\left\{\bY \leq 1/10\right\} \log(1/\bY) } $, we compute it explicitly, and recall that $\bX \geq t$, so that the above event is satisfied only if $p_i \leq 1/10$.
\begin{align}
& \E[\bX]{ \ind\left\{\bY \leq 1/10\right\} \log(1/\bY) } \nonumber \\
&\leq \E[\bX]{ \dfrac{\ind\left\{ \bY \leq 1/10 \right\}}{\bY}} \label{eq:bound-1}\\
&= \sum_{\ell=t}^{t/(10p_i)} \binom{\ell-1}{t-1} p_i^t (1-p_i)^{\ell - t} \cdot \frac{t}{\ell p_i} \leq \frac{t}{10p_i} \max_{\ell \in [t, t/(10p_i)]} \left(\frac{e(\ell-1)}{t-1}\right)^{t-1} p_i^{t-1} \cdot \frac{t}{\ell} \nonumber \\
	&\leq \frac{t}{10} \max_{\ell \in [t, t/(10p_i)]} \left(\dfrac{e^2(\ell-1)}{t-1} \right)^{t-2} p_i^{t-2} = \exp(-\Omega(t)). \nonumber
\end{align}
\end{proof}

\subsubsection{Verifying $\bY$ is subgamma} \label{sec:verifyYSubgamma}

Recall that $\bX$ is the number of independent draws from a $\Ber(p)$ distribution until we see $t$ successes. In other words, we may express $\bX = \bX_1 + \dots + \bX_t$, where $\bX_i$ is the number of draws of $\Ber(p)$ before we get a single success. Then, we always satisfy
\[ \E{\bX_i} = \frac{1}{p} \qquad \Pr{\bX_i > \ell} = (1-p)^{\lceil \ell \rceil} < e^{-p\ell}\,.  \]
This, in turn, implies that for any $r \geq 1$
\[ \left( \E{ | \bX_i - 1/p|^r} \right)^{1/r} \leq \left( \E[\bX_i, \bX_i']{|\bX_i - \bX_{i}'|^r } \right)^{1/r} \leq 2 \left( \E{ |\bX_i|^r} \right)^{1/r} = O(r/p)\,,\]
where the first line is by Jensen's inequality, and the second is by the triangle inequality and H\"{o}lder inequality. Finally, we use the tail bound on $\bX_i$ to upper bound the expectation of $|\bX_i|^r$. Then, we have
\begin{align*}
\E{ e^{\lambda(\bX_i - 1/p)}} &= 1 + \lambda \E{ \bX_i - 1/p} + \sum_{k=2}^{\infty} \frac{\lambda^k}{k!} \cdot \E{|\bX_i - 1/p|} \\
	&= 1 + \sum_{k=2}^{\infty} \frac{\lambda^k}{k!} \left( O(k/p)\right)^k \leq 1 + O(\lambda^2 / p^2), \qquad \text{when $|\lambda|$ sufficiently smaller than $p$} \\
	&\leq \exp\left(O(\lambda^2/p^2)\right)
\end{align*}
Then, since $\bX_1,\dots, \bX_t$ are all independent, we have
\[ \E{ e^{\lambda(\bX - t/p)}} \leq \exp\left( O(\lambda^2 t / p^2)\right) \Longrightarrow \E{ e^{\lambda(\bY - 1)}} \leq \exp\left( O(\lambda^2/ t)\right),\]
and this bound is valid whenever $|\lambda|$ is sufficiently smaller than $t$.

\newcommand{\ignore}[1]{}

\section{Improving Sample Complexity via Bucketing}\label{sec:bucketing}
In this section, we focus on estimating the expected value of $\log(\bX/t)$ with error at most $\epsilon$. 
Our goal here is to remove the $\poly(\log k)$ dependencies in the sample complexity of estimation. In particular, we prove the following theorem, which improves on the dependence on $k$ in Theorem~\ref{thm:main-thm-simple}.
\begin{theorem}
There exists a single-pass data stream algorithm using $O(1)$ words of working memory that processes a stream of $O(k \log^4(1/\eps) / \eps^2)$ i.i.d. samples from an unknown distribution $\mathcal D$ on $[k]$ and returns an additive $\eps$ approximation of $H(\mathcal{D})$ with probability at least $2/3$. 
\end{theorem}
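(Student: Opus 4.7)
The plan is to refine the estimator of Section~\ref{sec:simple-alg} by partitioning the range of $\bX$ into $L = \Theta(\log(1/\eps))$ doubling buckets. Concretely, set $b_\ell = t \cdot 2^\ell$ for $\ell = 0,1,\dots,L$ and $I_\ell = [b_{\ell-1}, b_\ell)$, together with a tail bucket $I_{L+1} = [b_L, \infty)$. When $\bX \in I_\ell$ for $\ell \le L$, we have $\log(\bX/t) \in [\ell-1,\ell)$, so the \emph{conditional} distribution of $\log(\bX/t)$ given bucket $\ell$ is supported in an interval of width $1$ and has conditional variance $O(1)$. The tail bucket $I_{L+1}$ effectively corresponds to indices with $p_i \lesssim \eps$, whose total contribution $\sum_{p_i \le \eps} p_i \log(1/p_i) = O(\eps \log(1/\eps))$ to $H(\mathcal D)$ is subleading and will be handled by a crude estimate of the tail mass combined with an $O(\log(1/\eps))$ upper bound on $\log(\bX/t)$ at the bucket boundary.

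Given this partition, the algorithm runs $N$ copies of \SimpleEstimator{} (with the same $r, t$ as in Theorem~\ref{thm:main-thm-simple}) in a single pass and, for each copy, classifies the resulting $\bX$ into a bucket. For $\ell \le L$ it maintains a counter $\bN_\ell$ of copies landing in $I_\ell$ and a running sum $\bs_\ell$ of $\log(\bX/t)$ over those copies, plus a single running sum $\hat G$ of the bias corrections $g(\bB_1,\dots,\bB_r)$ over all copies. As in Theorem~\ref{thm:main-thm-simple}, the per-bucket counters can be reduced to $O(1)$ words by running the $L$ per-bucket estimators in series on fresh streams at a constant-factor cost in samples. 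The final estimate is
\begin{align*}
\hat H \;=\; \sum_{\ell=1}^{L} \hat q_\ell \cdot \hat h_\ell \;+\; \widehat T \;-\; \hat G,
\end{align*}
where $\hat q_\ell = \bN_\ell/N$ estimates $q_\ell = \Pr[\bX \in I_\ell]$, $\hat h_\ell = \bs_\ell/\bN_\ell$ estimates $h_\ell = \Ex[\log(\bX/t) \mid \bX \in I_\ell]$, and $\widehat T$ estimates the tail contribution.

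The crux is a variance bound for $\hat H$. Decomposing $\hat q_\ell \hat h_\ell - q_\ell h_\ell = h_\ell(\hat q_\ell - q_\ell) + \hat q_\ell(\hat h_\ell - h_\ell)$ and using $\Var[\hat q_\ell] \le q_\ell/N$ together with the conditional-variance bound $\Var[\hat h_\ell \mid \bN_\ell = n] = O(1/n)$, one obtains $\Var[\hat q_\ell \hat h_\ell] = O(\ell^2 q_\ell / N + q_\ell / N)$. Summing over $\ell \le L$ and using the crude bound $\sum_\ell \ell^2 q_\ell \le L^2 \sum_\ell q_\ell \le L^2$, the main buckets contribute total variance $O(L^2/N)$; the tail estimator contributes $O(\eps^2)$ by construction; and $\hat G$ contributes $O(1/N)$ by Lemma~\ref{lem:bound-on-g}. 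Chebyshev therefore requires $N = \Theta(L^2/\eps^2) = \Theta(\log^2(1/\eps)/\eps^2)$ outer copies. Each copy costs $r + t/p_{\bi}$ expected draws from $\mathcal D$, summing to $N(r + tk) = O(ktL^2/\eps^2) = O(k \log^4(1/\eps)/\eps^2)$ since $t = \Theta(\log^2(1/\eps))$.

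The main obstacle I expect is making the bucket decomposition compatible with the bias correction, which was proved \emph{unconditionally} in Lemma~\ref{lem:main-lemma}. Since $g(\bB_1,\dots,\bB_r)$ depends only on $p_{\bi}$ and not on which bucket $\bX$ lands in, by linearity and Claim~\ref{cl:exp-h-y} we still have $\Ex[\hat H] = \sum_{\ell=1}^{L} \Ex[\log(\bX/t)\ind\{\bX \in I_\ell\}] + \Ex[\widehat T] - \Ex[\hat G]$, which telescopes to $\Ex[\log(\bX/t)] - \Ex[g] = H(\mathcal D) \pm \eps$ provided the tail bucket's share of $\Ex[\log(\bX/t)]$ is correctly absorbed into $\widehat T$. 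The delicate part is showing that $\widehat T$ can be built without inflating the variance: here we use the $p \log(1/p)$ tail bound above together with the observation that $\Pr[\bX \in I_{L+1}]$ is itself estimable to $O(\eps/\log(1/\eps))$ additive accuracy with only $O(\log^2(1/\eps)/\eps^2)$ extra samples, which fits inside the overall budget.
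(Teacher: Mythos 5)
There is a genuine gap in your proposal, and it is not a peripheral one: the dyadic bucketing with $L = \Theta(\log(1/\eps))$ simply cannot control the truncation bias, and your claim that the tail mass contributes $O(\eps\log(1/\eps))$ to the entropy is false. First, note that your estimator telescopes: $\hq_\ell\hat h_\ell = \bs_\ell/N$, so $\sum_{\ell\le L}\hq_\ell\hat h_\ell = \frac{1}{N}\sum_j \log(\bX_j/t)\,\ind\{\bX_j \le b_L\}$, i.e.\ the bucketing is purely organizational and the main sum is just a \emph{truncated} sample mean of $\log(\bX/t)$. With $b_L = t\cdot 2^L = t/\eps^{\Theta(1)}$, the bias of that truncation is
\[
\E{\log(\bX/t)\,\ind\{\bX > b_L\}} - \log(b_L/t)\,\Pr{\bX > b_L}
\;\le\;
\tfrac{1}{\ln 2}\,\E{\bX/b_L}
\;=\;
\tfrac{tk}{b_L\ln 2}
\;=\;
\Theta\!\left(k\,\eps^{\Theta(1)}\right),
\]
which blows up for $k$ large. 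Your proposed tail correction $\widehat T \approx \Pr{\bX > b_L}\cdot O(\log(1/\eps))$ does not repair this, because the conditional mean $\E{\log(\bX/t)\mid \bX > b_L}$ can be as large as $\log k$, not $O(\log(1/\eps))$; indeed for the uniform distribution on $[k]$ with $1/k\le\eps$ the \emph{entire} entropy $\log k$ comes from indices with $p_i\le\eps$, falsifying your "$\sum_{p_i\le\eps}p_i\log(1/p_i)=O(\eps\log(1/\eps))$" claim (the correct bound, by Jensen, has a $\log k$ in it). If instead you push $b_L$ out to $tk/\eps$ to kill the bias, you need $L=\Theta(\log(k/\eps))$, and then your own variance bound $\Var \le L^2/N$ forces $N = \Theta(\log^2(k/\eps)/\eps^2)$, reinstating the $\log^2 k$ factor you were trying to remove. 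Uniform dyadic bucketing with a single sample size per copy cannot escape this tension.

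The paper's route is structurally different precisely to evade this. It sets $L = \log^* k$ buckets with iterated-logarithm breakpoints $b_\ell = tk/(\log^{(\ell)}k)^4$ and $b_L = tk/\eps$; it draws a \emph{different} number of fresh copies $r_\ell$ per bucket; it terminates sampling for bucket $\ell$ after only $b_\ell$ draws (so the cheap low buckets can afford many more copies); and it sets $\hq_L = 1-\sum_{\ell<L}\hq_\ell$, which is essential for the decomposition $\sum_\ell \hq_\ell\hH_\ell - \sum_\ell q_\ell H_\ell = \sum_{\ell<L}(\hq_\ell - q_\ell)(\hH_\ell - \hH_L) + \sum_\ell q_\ell(\hH_\ell - H_\ell)$ to expose the product $(\hq_\ell-q_\ell)(\hH_\ell-\hH_L)$ whose two factors are both small. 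The variance per bucket is then weighted by $\log^{(\ell)}k$ (not $L^2$), and the iterated-log series $\sum_\ell 1/\log^{(\ell)}k = O(1)$ is what ultimately removes the $\log^2 k$. None of these ingredients appears in your proposal, and they are load-bearing: without the variable $r_\ell$, the early stopping, and the $\hq_L$ normalization, the estimator collapses back to a truncated sample mean with the bias/variance tradeoff described above. I would encourage you to revisit the analysis around equation (\ref{eq:divide-val}) and Lemma~\ref{lem:first}--\ref{lem:second} of the paper to see where each of those three ideas is used.
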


Given the work done in Section~\ref{sec:simple-alg}, it will suffice to estimate the quantity $H$ (we give the explicit reduction in Lemma~\ref{lem:reduce-to-H} shortly):
\begin{align}
    H &\coloneqq \E[\bi\sim\DD, \bX\sim \NB(t, p_{\bi})]{\log{(\bX/t)}}\,, \label{eq:val-H}
\end{align}
where $t$ is set to $\Theta(\log^2(1/\eps))$, such that we can then apply the correction term of Section~\ref{sec:simple-alg}. Recall that the randomness in the above expectation is taken over the random choice of $\bi \sim \DD$, and $\bX$ is a negative binomial random variable drawn from $\NB(t, p_{\bi})\,.$ First, we show that it suffices to estimate (\ref{eq:val-H}) in order to estimate the entropy, given our tools from Section~\ref{sec:simple-alg}.

\begin{lemma}\label{lem:reduce-to-H}
Consider a fixed distribution $\mathcal D$, and for $\eps > 0$ suppose $\hat{H} \in \R$ is such that $|H - \hat{H} | \leq \eps$. Then, there exists a $O(1)$ word streaming algorithm which given $\hat H$ and using an additional $O(\log(1/\eps) / \eps^2)$ independent samples from $\mathcal D$, outputs an estimate to the entropy of $\mathcal D$ up to error $\pm 2\eps$ with probability at least $0.9$.
\end{lemma}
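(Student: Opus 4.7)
The plan is to write $H(\mathcal{D})$ as $H$ minus a correction term $C$ that can itself be estimated cheaply by a bounded side estimator, then output the difference of the two estimates.

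First, I would make the algebraic link between $H$ and $H(\mathcal{D})$ explicit. Splitting $\log(\bX/t) = f(\bY) + h(\bY)$ exactly as in Claim~\ref{cl:exp-h-y} (with $\bY = \bX \cdot p_{\bi}/t$, $f$ the degree-$r$ Taylor polynomial of $\log$ at $1$, and $h$ its remainder) and using $\E[\bX]{f(\bY)} = h_t(p_{\bi})$ by definition of $h_t$, taking expectation over $\bi \sim \mathcal{D}$ yields
\[ H \;=\; H(\mathcal{D}) \;+\; \E[\bi \sim \mathcal{D}]{h_t(p_{\bi})} \;+\; \E[\bi, \bX]{h(\bY)}. \]
Set $C \coloneqq \E[\bi \sim \mathcal{D}]{h_t(p_{\bi})}$. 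Lemma~\ref{cl:h-small} bounds $|\E[\bX]{h(\bY)}|$ by (a constant times) $\eps$ uniformly in $p_{\bi}$, so after tightening the hidden constants in $r = \Theta(\log(1/\eps))$ and $t = \Theta(\log^2(1/\eps))$ we have $|H(\mathcal{D}) - (H - C)| \leq \eps/4$, and it suffices to estimate $C$ to additive error $\eps/2$ with confidence $0.9$.

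Next, I would build an unbiased, $O(1)$-bounded, cheap estimator for $C$. The correction-term machinery already embedded in \SimpleEstimator\ does the job on its own: draw $\bi \sim \mathcal{D}$, then draw $r$ additional i.i.d.\ samples, let $\bB_\ell$ indicate that the first $\ell$ of these additional samples all equal $\bi$, and output $g(\bB_1,\dots,\bB_r)$. Since $g$ is linear and $\E{\bB_\ell \mid \bi} = p_{\bi}^\ell$, the conditional expectation is $h_t(p_{\bi})$ and the unconditional expectation is $C$. Lemma~\ref{lem:bound-on-g} guarantees this estimator lies in $[-c,c]$ for a universal $c$, so each copy has variance $O(1)$, consumes $r+1 = O(\log(1/\eps))$ stream samples, and can be maintained in $O(1)$ words (a single $O(\log r)$-bit streak counter plus the bounded-coefficient polynomial $g$ controlled in Section~\ref{sec:simple-alg}).

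Finally, averaging $m = \Theta(1/\eps^2)$ independent copies of this side estimator across disjoint blocks of the stream yields $\hat{C}$ with $|\hat{C} - C| \leq \eps/2$ with probability at least $0.9$ by Chebyshev's inequality, using $O(\log(1/\eps)/\eps^2)$ samples and $O(1)$ words overall (one running accumulator plus the per-copy state). The algorithm then outputs $\hat{H} - \hat{C}$, whose error is bounded by the sum of $|\hat{H}-H|\leq \eps$, $|\hat{C}-C|\leq \eps/2$, and the Taylor-tail slack $\leq \eps/4$, for a total of at most $2\eps$. There is no real technical obstacle beyond bookkeeping: the main conceptual point is simply that the correction subroutine of Section~\ref{sec:simple-alg} is itself an unbiased, bounded, samples-cheap estimator that can be run on the side and averaged; the only care required is to fix the constants hidden in $r$, $t$, and $m$ so the three error sources add to at most $2\eps$.
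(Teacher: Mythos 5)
Your proposal is correct and follows essentially the same route as the paper: both identify the correction term $\E[\bi\sim\mathcal D]{h_t(p_{\bi})}$, estimate it by averaging $O(1/\eps^2)$ copies of the bounded statistic $g(\bB_1,\dots,\bB_r)$ (using Lemma~\ref{lem:bound-on-g} for the variance bound and Chebyshev for concentration), and then output $\hat H$ minus this estimate, invoking the Taylor-remainder bound to close the gap to $H(\mathcal D)$. The only cosmetic difference is that you track the error budget slightly more carefully (splitting $2\eps$ as $\eps + \eps/2 + \eps/4$), whereas the paper invokes Lemma~\ref{lem:main-lemma} directly and its final display is a bit terse about the same accounting.
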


\begin{proof}
The approach is to estimate 
\begin{align} 
\E[\bi \sim \mathcal D]{h_t(p_{\bi})} = \E[\bi \sim \mathcal{D}]{g(p_{\bi}, p_{\bi}^2,\dots, p_{\bi}^r)}. \label{eq:average-g}
\end{align}
There exists an algorithm using $O(\log(1/\eps) / \eps^2)$ samples to estimate the above quantity: for $j \in \{0, \dots, O(1/\eps^2)\}$, one takes a sample $\bi_j \sim \mathcal{D}$ and uses $r = O(\log(1/\eps))$ additional samples $\bs_1,\dots, \bs_r \sim \mathcal{D}$ to define
\[ \bB^{(j)}_m \eqdef \ind\{ \bs_1 = \dots = \bs_m = \bi_j \} \in \{0, 1\}, \]
and lets
\[ \bZ_j = g(\bB^{(j)}_1,\dots, \bB^{(j)}_r).\]
Then, let $\bZ$ be the average of all $\bZ_j$'s, which is an unbiased estimate to $\E[\bi \sim \mathcal{D}]{g(p_{\bi}, p_{\bi}^2,\dots, p_{\bi}^r)}$. Since $g$ is bounded (from Lemma~\ref{lem:bound-on-g}), the variance of $O(1/\eps^2)$ such values is a large constant factor smaller than $\eps^2$. By Chebyshev's inequality, we estimate (\ref{eq:average-g}) to error $\pm \eps$ with probability at least $0.9$. With that estimate, we will now use Lemma~\ref{lem:main-lemma}. Specifically, the entropy of $\mathcal{D}$ is exactly $\E[\bi \sim \mathcal{D}]{\log(1/p_{\bi})}$, and we have
\begin{align*}
\left| \E[\bi \sim \mathcal{D}]{\log(1/p_{\bi})} - \left(\hat{H} - \bZ \right)\right| &\leq \eps + \left| \E[\bi\sim\mathcal{D}]{\log(1/p_{\bi})} - \left(\hat{H} - \bZ\right) \right| \\
    &\leq \eps + \E[\bi \sim \mathcal{D}]{\left| \log\left(\frac{1}{p_{\bi}}\right) - \E{\boldeta_{\bi}}\right|} \leq 2\eps,
\end{align*}
where $\boldeta_{\bi}$ is the result of running $\SimpleEstimator(\mathcal{D}, \bi)$.
\end{proof}

 It thus suffices to design an algorithm to estimate (\ref{eq:val-H}). Our approach is to use a bucketing scheme. At a high level, we partition the range of $\bX$ into $L$ intervals: $I_1, I_2, \ldots, I_L$. We compute the conditional expectation of $\log(\bX/t)$ in each interval separately. Then, we take the weighted average of these conditional expectations, where the weights are determined by the probability of the intervals.

\paragraph{Unbounded $\pmb X$:}
As specified above, the random variable $\bX$ is a mixture of negative binomial random variables, so $\bX$ may be unbounded. In addition, if we had sampled $\bi \sim \DD$ where $p_{\bi}$ was very small, $\bX$'s value will tend to be very large.  
It will be convenient to introduce a parameter $\Xmax \in \N$ and consider the random variable $\bX' \coloneqq \min(\bX, \Xmax)$. 
Let $\tilH$ denotes the expected value of $\bX'$:
$$\tilH \coloneqq \E[\bi,\bX]{\log(\bX'/t)}\,.$$
For the rest of the section, we will seek to approximate $\tilH$, and the fact that this is a good estimate for $H$ follows from the following lemma. 
\begin{restatable}{lemma}{lemCutoffError}	
\label{lem:cutoffError}
Let $\bi \sim \DD$, and let $\bX$ and be a negative binomial random variable from $\NB(t, p_{\bi})$. Let $\bX'$ be the bounded version of $\bX$: $\bX' \coloneqq \min(\bX,\Xmax)$.  
Let $t \in \N$ and $\eps \in (0, 1)$. If we set $\Xmax = tk \ln(2)/\epsilon$, then
	$$ \left| H - \tilH \right| = \E[\bi, \bX]{\log(\bX/t) - \log(\bX'/t)}  \leq \epsilon\,.$$\end{restatable}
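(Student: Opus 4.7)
The plan is to exploit the monotonicity $\bX' \le \bX$ so that the integrand $\log(\bX/t) - \log(\bX'/t)$ is nonnegative pointwise; this immediately justifies removing the absolute value in the statement, reducing the claim to bounding a single expectation. The two logarithms agree whenever $\bX \le \Xmax$, so the only contribution to $H - \tilH$ comes from the tail event $\{\bX > \Xmax\}$, where $\bX' = \Xmax$. Hence
\[
H - \tilH \;=\; \E[\bi,\bX]{\ind\{\bX > \Xmax\} \cdot \log(\bX/\Xmax)}.
\]

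The next step is to linearize the logarithm, reducing the entire estimate to a first-moment calculation on $\bX$. Since $\ln y \le y - 1$ for $y \ge 1$, equivalently $\log_2 y \le (y-1)/\ln 2$ in base-$2$, I get
\[
\ind\{\bX > \Xmax\} \cdot \log(\bX/\Xmax) \;\le\; \ind\{\bX > \Xmax\} \cdot \frac{\bX - \Xmax}{\Xmax\,\ln 2} \;\le\; \frac{\bX}{\Xmax\,\ln 2}.
\]
The inner bound is already crude but suffices for the target accuracy; the slack gets absorbed into the choice of $\Xmax$.

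After that, the main quantity is $\E{\bX}$. Conditioning on $\bi \sim \DD$, the random variable $\bX \mid \bi$ is $\NB(t, p_{\bi})$ with mean $t/p_{\bi}$, so the tower property gives
\[
\E{\bX} \;=\; \sum_{i=1}^{k} p_i \cdot \frac{t}{p_i} \;=\; tk.
\]
Combining the two displays and plugging in $\Xmax = tk\ln(2)/\epsilon$ yields $|H - \tilH| \le tk/(\Xmax \ln 2) \le \epsilon$, with the constants in $\Xmax$ chosen precisely so that this final arithmetic closes.

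There is no serious technical obstacle here; the argument is a short three-step computation (telescoping the difference to a tail expectation, linearizing $\log$, and evaluating the first moment of a mixture of negative binomials). The only thing to be careful about is that the assumption $p_i > 0$ on the support is needed to keep $\E{\bX} = tk$ finite, and that the exchange of sum and expectation is justified by nonnegativity. The linearization step is where the ``$\ln 2$'' factor inside $\Xmax$ appears, and alternative bounds (e.g., applying Jensen's inequality conditional on $\{\bX > \Xmax\}$) could sharpen constants but are not needed.
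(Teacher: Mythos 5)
Your proof is correct and follows essentially the same approach as the paper's: both arguments use monotonicity to drop the absolute value, linearize via $\log_2 y \leq (y-1)/\ln 2$, bound the resulting quantity by $\bX/(\Xmax \ln 2)$, and close with $\E{\bX} = tk$; your explicit introduction of the indicator $\ind\{\bX > \Xmax\}$ is cosmetically more verbose but mathematically identical to the paper's direct bound on $\log(\bX/\bX')$.
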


\begin{proof}
We note that since $\log(\cdot)$ is monotone increasing, we must have $H \geq \tilH$. To see that it is not much larger, note that we always have $\log z = \ln(z) / \ln(2)  \leq (z - 1) / \ln(2)$, which means
\begin{align*}
H - \tilH &= \E[\bi,\bX]{\log(\bX / \bX')} \leq \frac{1}{\ln(2)}\E[\bi, \bX]{\dfrac{\bX}{\min\{\bX, \Xmax\}} - 1} \leq \frac{1}{\ln(2)}\E[\bi, \bX]{\dfrac{\bX}{\Xmax}} \\
&= \frac{1}{\Xmax \cdot \ln(2)}\sum_{i=1}^k p_i \cdot \frac{t}{p_i} = \frac{tk}{\Xmax \cdot \ln(2)} \leq \eps.
\end{align*}
\end{proof}

\paragraph{Comparison to related work:} It is worth noting that the proofs in this section are inspired by the work of~\cite{AcharyaBIS19}. The authors used a similar bucketing technique to estimate entropy. While the structure of our proof is similar, there are subtle differences between our work and what they did. First, we are focusing on estimating different quantities. In particular, we work with an unbounded random variable while their estimator is bounded. Moreover,  they have a {two-step} bucketing system where they draw a sample $\bi$ and two estimates for $p_{\bi}$; they use one estimate for detecting which bucket falls into and the second one to estimate entropy in that bucket. One of the complications of this approach is that the second estimator may fall into a different bucket; Thus, they have to ``clip" the second estimator to make sure it is close to the bucket of the first estimator. We have circumvented these hurdles by using the same estimate for detecting which bucket we are in and estimating $\log(\bX/t)$ in that bucket. 

\paragraph{The algorithm:}
We write $\tilH$ in terms of conditional expectation in the intervals.
\begin{align*}
	\tilH = \sum_{\ell=1}^{L} \underbrace{\Pr[\bi\sim\DD, \bX\sim \NB(t, p_{\bi})]{\bX' \in I_\ell}}_{q_\ell \coloneqq} \cdot \underbrace{\E[\bi\sim\DD, \bX\sim \NB(t, p_{\bi})]{\log(\bX'/t) \mid \bX' \in I_\ell}}_{H_\ell \coloneqq }\,.
\end{align*}
Let $q_\ell$ denote the probability of $\bX'$ being in $I_\ell$, and $H_\ell$ denote the conditional expectation in $I_\ell$. Our algorithm estimate $q_\ell$ and $H_\ell$ for each interval to find an estimate for $\tilH$. Below we give a brief description of our algorithm, and the pseudocode can be found in Algorithm~\ref{alg:logEstimator}.

Below, we define $b_0 = t < b_1< \cdots < b_L=\Xmax$
to be $L+1$ parameters (which we will set shortly) that denote the boundary points of the intervals: 
$$I_{\ell} = [b_{\ell -1}, b_\ell) \quad\quad \forall i \in [L-1]\,,\quad \quad \quad I_L = [b_{L-1}, b_L]\,.$$

For each interval $I_\ell$, we draw $r_\ell$ samples from $\DD$, namely $\bi_1,\dots, \bi_{r_{\ell}} \sim \DD$. 
For each $\bi_{j}$, we start drawing samples from $\DD$ in the process of drawing a negative binomial random variable $\bX_j \sim \NB(t, p_{\bi_j})$; then, we will set $\bX_j' = \min(\bX_j, \Xmax)$. Furthermore, we will only consider $\bX_j'$'s that fall in $I_\ell$, which means that we can stop early if we already know $\bX_j'$ will be too large. In particular, if we draw $b_\ell$ samples and have not observed $t$ instances of $\bi_{j}$, we can already conclude $\bX_j'$ is not in $I_\ell$ and stop sampling.  Among these $r_\ell$ samples $\{ \bi_{1},\dots, \bi_{r_{\ell}} \}$, let $\bc_\ell$ denote the number of $\bX_j'$'s that fall into $I_\ell$. We estimate the weight of each bucket by $\hq \coloneqq \bc_\ell/r_\ell$. For the last bucket, we set $\hq_\ell$ in a way that the sum of the weight is one:
$$\hq_\ell = \frac{\bc_\ell}{r_\ell}\,, \,\quad \quad \forall j = 1, \ldots, L-1\,,\quad\quad  \hq_L \coloneqq 1 - \sum_{j=1}^{L-1} \hq_L\,.$$
Also, we compute an average of $\log (\bX_j'/t)$ of such $\bX_j'$'s and denote it by $\hH_\ell$:
$$\hH_\ell = \frac{\sum_{j=1}^{r_\ell} \ind\{\bX'_j \in I_\ell\} \cdot \log(\bX'_j/t)}{\bc_\ell}\quad\quad\quad \forall \ell = 1, \ldots, L\,.$$
In these definitions, we take $\hH_{\ell} = \log(b_{\ell}/t)$ if $\bc_{\ell} = 0$. Our estimate for $\tilH$ is the weighted sum of $\hH_\ell$:
$$\hH = \sum_{\ell=1}^L \hq_\ell \cdot \hH_\ell\,.$$

\begin{algorithm}[ht]
\caption{Estimating $\E{\log{\bX/t}}$ via Bucketing}
\label{alg:logEstimator}
\begin{algorithmic}[1]
\Procedure{LogEstimator}{$k$, $\epsilon$, sample access to $\DD$}
	\State{$\hH \gets 0$}
	\For{$\ell = 1, 2, \ldots, L$}
		\State{$\bc_\ell \gets 0, \ \hH_\ell \gets 0$}
		\For{$r_\ell$ times}
			\State{Draw $\bi \sim \DD$}
			\State{Draw $b_\ell$ samples from $\DD$ but terminate early if $t$ occurrences of $\bi$ are observed.}
			\State{$\bX \gets $ number of samples drawn}
            \State{$\hH_\ell  \gets \hH_\ell + \log(\bX/t) \cdot \ind\{\bX\in I_\ell\}$ and $\bc_\ell \gets \bc_\ell + \ind\{\bX\in I_\ell\}$}
		\EndFor		
    	\State{$\hH_\ell \gets \hH_\ell/c_\ell,\ \hq_\ell \gets c_\ell/r_\ell$}
    	\If {$\ell = L$}
    		\State{$\hq_L \gets 1 - \sum_{\ell=1}^{L-1} \hq_\ell$}
    	\EndIf
    	\State{$\hH \gets \hH + \hat q_\ell \hH_\ell $}
    \EndFor
\EndProcedure
\end{algorithmic}
\end{algorithm}

It is fairly straightforward to show that Algorithm~\ref{alg:logEstimator} uses a constant number of words. Note that $r_\ell$ (similarly $b_\ell$'s) can be computed from $r_{\ell-1}$, so we do not need to calculate and store all the $r_\ell$'s beforehand. Also, to compute $\hq_L$, we do not need all the $\hq_\ell$'s. We only need to keep a running sum of $\hq_\ell$\,. Thus, we only need a constant number of words of memory. We analyze correctness simply by bounding the variance. Namely, the remainder of the section will be devoted to proving the following lemma, which will imply that our estimator will be within $\pm \eps$ of $\tilH$ with constant probability.

\begin{lemma}\label{lem:square-deviation}
For any $k \in \N$ and $\eps > 0$, there exists a setting of $L > 0$, parameters $b_0 = t < b_1 < \dots < b_L = \Xmax$ and $r_{1},\dots, r_{\ell}$ for $\ell \in [L]$ such that 
\begin{align} 
\E{ \left( \sum_{\ell=1}^L \hq_{\ell} \cdot \hH_{\ell} - \sum_{\ell=1}^L q_{\ell} \cdot H_{\ell} \right)^2 } = O(\eps^2). \label{eq:square-deviation}
\end{align}
\end{lemma}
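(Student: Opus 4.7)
My plan rests on an algebraic decomposition followed by a conditioning argument that exploits independence across buckets. Using $\sum_\ell \hq_\ell = \sum_\ell q_\ell = 1$ (the first is forced by $\hq_L := 1 - \sum_{\ell<L}\hq_\ell$), I first rewrite
\begin{equation*}
\sum_{\ell=1}^{L} \hq_\ell \hH_\ell - \sum_{\ell=1}^{L} q_\ell H_\ell
\;=\; (\hH_L - H_L) + \sum_{\ell<L}\bigl[\hq_\ell(\hH_\ell - \hH_L) - q_\ell(H_\ell - H_L)\bigr].
\end{equation*}
Each $\ell<L$ summand depends only on bucket-$\ell$ and bucket-$L$ samples. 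Conditioning on $\hH_L$ therefore decouples the summands across $\ell<L$, and a direct computation gives their conditional means as $-q_\ell(\hH_L-H_L)$, so that after recentering the whole expression becomes $q_L(\hH_L-H_L) + \sum_{\ell<L}\tilde D_\ell$ with the $\tilde D_\ell$ conditionally independent and mean zero. Iterating expectation then factors the second moment as $q_L^2\E{(\hH_L-H_L)^2} + \sum_{\ell<L}\E{\Var{\tilde D_\ell \mid \hH_L}}$, avoiding any Cauchy--Schwarz loss of a factor of $L$.

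For each $\ell<L$, the conditional variance of $\tilde D_\ell$ is controlled by a standard second-moment bound: $\tilde D_\ell$ is an empirical average of $r_\ell$ bounded i.i.d.\ summands of the form $\ind\{\bX'_j \in I_\ell\}(\log(\bX'_j/t) - \hH_L)$, so $\Var{\tilde D_\ell \mid \hH_L} \lsim q_\ell(\log^2(b_\ell/t)+\hH_L^2)/r_\ell$ and hence $\E{\Var{\tilde D_\ell \mid \hH_L}} = O(q_\ell \log^2(b_L/t)/r_\ell)$. For the bucket-$L$ term, $\hH_L$ is an average over $\bc_L$ samples in $[0,\log(b_L/t)]$, so $q_L^2 \E{(\hH_L - H_L)^2} \lsim q_L \log^2(b_L/t)/r_L$ provided $r_L q_L$ is large enough. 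The key quantitative input is Markov's inequality: $\E{\bX} = tk$ (Fact~\ref{fact:sample-complexity}) gives $q_\ell \leq tk/b_{\ell-1}$, so that geometrically-spaced boundaries satisfy $q_\ell b_\ell = O(tk)$.

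Concretely I take $b_\ell = t \cdot 2^\ell$ (so $L = \lceil\log_2(\Xmax/t)\rceil$) and $r_\ell = M/b_\ell$ for a common per-bucket budget $M$; this is admissible because the early-termination rule in the inner loop of Algorithm~\ref{alg:logEstimator} ensures each bucket consumes at most $r_\ell b_\ell = M$ samples. Summing the per-bucket variance bounds and invoking $q_\ell b_\ell = O(tk)$ yields an overall variance of order $\polylog \cdot tk/M$; choosing $M = \Theta(tk \cdot \polylog/\eps^2)$ then delivers the target $O(\eps^2)$, and the total sample complexity $\sum_\ell r_\ell b_\ell = LM$ matches the claimed bound after substituting $t = \Theta(\log^2(1/\eps))$.

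The main obstacle is bucket $L$: because $\hq_L$ couples bucket $L$ to every other bucket and $|\hH_L|$ can be as large as $\log(\Xmax/t)$, a naive triangle-inequality argument would lose a factor of $L$ in the final variance, and the conditioning trick above is precisely what sidesteps this. A secondary wrinkle is the degenerate case $\bc_L = 0$, where the algorithm sets $\hH_L = \log(b_L/t)$ by convention: a Chernoff bound on $\bc_L \sim \bin(r_L, q_L)$, combined with the choice of $r_L$ large enough that $r_L q_L = \Omega(\log(1/\eps))$, makes this event contribute only lower-order error.
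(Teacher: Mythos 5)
Your conditioning decomposition is both correct and genuinely sharper than what the paper does. The paper bounds the second moment of $\sum_\ell \hq_\ell \hH_\ell - \sum_\ell q_\ell H_\ell$ by first splitting into the two sums in~(\ref{eq:divide-val}) and then applying $(a+b)^2 \leq 2a^2 + 2b^2$ followed by a weighted Cauchy--Schwarz with weights $\log^{(\ell)} k$; that Cauchy--Schwarz step is exactly what introduces the extra $\log^{(\ell)} k$ factor in Lemma~\ref{lem:second}, which then has to be absorbed into the choice of $r_\ell$. Your recentering to $q_L(\hH_L - H_L) + \sum_{\ell < L}\tilde D_\ell$, with the $\tilde D_\ell$ conditionally mean-zero and independent across $\ell$ given the bucket-$L$ data, makes the cross terms vanish exactly and factors the second moment with no Cauchy--Schwarz loss. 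That part of the argument is right and, if carried through, would let one drop the $\log^{(\ell)} k$ factor from the $r_\ell$ in~(\ref{eq:param-settings}).

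The problem is the parameter choice, which silently gives up the whole point of the bucketing section. You set $b_\ell = t \cdot 2^\ell$, so $L = \Theta(\log(\Xmax/t)) = \Theta(\log(k/\eps))$, and $r_\ell = M/b_\ell$, so each bucket costs $\Theta(M)$ samples. The total is then $\sum_\ell r_\ell b_\ell = \Theta(LM)$, and the variance is, after the $q_\ell b_\ell = O(tk)$ trick, $\Theta(tk \log^2(b_L/t)/M)$, so you need $M = \Omega(tk \log^2(k/\eps)/\eps^2)$ and hence the total cost is $\Omega(tk \log^3(k/\eps)/\eps^2)$. With $t = \Theta(\log^2(1/\eps))$ that is $\Omega(k \log^3 k \cdot \polylog(1/\eps)/\eps^2)$, which is not $O(k \log^4(1/\eps)/\eps^2)$ and is in fact worse in $k$ than the unbucketed Theorem~\ref{thm:main-thm-simple}. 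This is not a constant-factor slip: a geometric partition of $[t, \Xmax]$ forces $\Theta(\log(k/\eps))$ buckets, each of which must be allocated roughly the same budget, so a $\log k$ factor in the sample complexity is unavoidable under your choice of $b_\ell$. The paper's choice is structurally different: $L = \log^* k$ buckets with iterated-logarithm boundaries $b_\ell = tk/(\log^{(\ell)} k)^4$ and $r_\ell$ scaling with $\log^{(\ell)} k$, so that the per-bucket costs $r_\ell b_\ell \approx tk \polylog(1/\eps)/(\eps^2 \log^{(\ell)} k)$ form a convergent series~(\ref{eq:summing}), and the whole sum is $O(k \polylog(1/\eps)/\eps^2)$.

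Note also that the lemma, read in isolation, is almost vacuous: $L=1$ with $r_1 = \log^2(\Xmax/t)/\eps^2$ already gives variance $O(\eps^2)$, with terrible sample cost. The substance is the \emph{specific} setting~(\ref{eq:param-settings}), which simultaneously controls the variance and makes the expected number of samples $O(k\log^4(1/\eps)/\eps^2)$. A correct proof must produce a parameter setting that does both. Your conditioning technique would be a genuine improvement if plugged into the paper's iterated-logarithm bucketing, but the geometric bucketing you chose does not serve the theorem.
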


We note that, once we prove (\ref{lem:square-deviation}), we guarantee that our estimator is within $\pm O(\eps)$ of $\tilH$ with probability $0.9$ by Chebyshev's inequality. Before delving into the proof, we note that we may re-write the left-hand side of (\ref{eq:square-deviation}) as:
\begin{align}
\sum_{\ell=1}^L \hq_{\ell} \cdot \hH_{\ell} - \sum_{\ell=1}^L q_{\ell} \cdot H_{\ell} &= \sum_{\ell=1}^L \left(\hq_{\ell} - q_{\ell} \right) \hH_{\ell} + \sum_{\ell=1}^L q_{\ell} \left( \hH_{\ell} - H_{\ell}\right) \nonumber \\
	&= \sum_{\ell=1}^{L-1} \left( \hq_{\ell} - q_{\ell}\right) \hH_{\ell} + \left(1 - \sum_{\ell=1}^{L-1} \hq_{\ell} - 1 + \sum_{\ell=1}^{L-1} q_{\ell}\right) \hH_{L} + \sum_{\ell=1}^L q_{\ell} \left( \hH_{\ell} - H_{\ell}\right) \nonumber \\
	&= \sum_{\ell=1}^{L-1} \left(\hq_{\ell} - q_{\ell}\right) \left(\hH_{\ell} - \hH_{L}\right) + \sum_{\ell=1}^{L} q_{\ell} \left( \hH_{\ell} - H_{\ell}\right). \label{eq:divide-val}
\end{align}
Therefore, the upper bound on (\ref{eq:square-deviation}) follows from the following two lemmas.
\begin{lemma}\label{lem:first}
For any setting of $t = b_0 < \dots < b_{L} = X_{\max}$ and $\{ r_{\ell} \in \N : \ell \in [L]\}$, we have
\[ \E{ \left( \sum_{\ell=1}^L q_{\ell} \left(\hH_{\ell} - H_{\ell}  \right)\right)^2} \leq \sum_{\ell=1}^L q_{\ell} \left(1 - q_{\ell} \right)^{r_{\ell}} \cdot \log^2(b_{\ell} / b_{\ell-1}) + 2\sum_{\ell=1}^L \frac{\log^2(b_{\ell} / b_{\ell-1})}{r_{\ell} + 1}. \]
\end{lemma}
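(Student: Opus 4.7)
The plan is to exploit the independence of the per-bucket estimators $\hH_1,\dots,\hH_L$ across $\ell$—each outer iteration of Algorithm~\ref{alg:logEstimator} uses a fresh block of samples—and to perform a bias–variance decomposition on each $\hH_\ell$ separately. Writing $B_\ell \coloneqq \E{\hH_\ell} - H_\ell$ and $V_\ell \coloneqq \hH_\ell - \E{\hH_\ell}$, independence across $\ell$ together with $\E{V_\ell}=0$ gives
\[
\E{\Bigl(\sum_{\ell=1}^L q_\ell(\hH_\ell - H_\ell)\Bigr)^2} \;=\; \Bigl(\sum_{\ell=1}^L q_\ell B_\ell\Bigr)^2 \;+\; \sum_{\ell=1}^L q_\ell^2\,\mathrm{Var}(\hH_\ell),
\]
so it suffices to bound each piece.

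For the bias, the only source of bias is the degenerate event $\{\bc_\ell = 0\}$, on which the algorithm outputs $\hH_\ell = \log(b_\ell/t)$; conditional on $\{\bc_\ell = c \geq 1\}$, the selected $\bX_j'$ values are i.i.d.\ draws from the distribution of $\bX'$ conditioned on $\bX' \in I_\ell$, so $\E{\hH_\ell \mid \bc_\ell = c} = H_\ell$ for every $c \geq 1$. Hence $B_\ell = (1-q_\ell)^{r_\ell}(\log(b_\ell/t) - H_\ell)$, and since $H_\ell \in [\log(b_{\ell-1}/t),\log(b_\ell/t)]$ we get $|B_\ell| \leq (1-q_\ell)^{r_\ell}\log(b_\ell/b_{\ell-1})$. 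Applying Cauchy–Schwarz together with $\sum_\ell q_\ell = 1$ then yields $\bigl(\sum q_\ell B_\ell\bigr)^2 \leq \sum q_\ell B_\ell^2 \leq \sum q_\ell (1-q_\ell)^{2r_\ell}\log^2(b_\ell/b_{\ell-1})$.

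For the variance, I would use the law of total variance conditioned on $\bc_\ell$. Conditional on $\bc_\ell = c \geq 1$, $\hH_\ell$ is the mean of $c$ i.i.d.\ random variables taking values in $[\log(b_{\ell-1}/t),\log(b_\ell/t)]$, so $\mathrm{Var}(\hH_\ell \mid \bc_\ell = c) \leq \log^2(b_\ell/b_{\ell-1})/c$, while it is zero when $\bc_\ell = 0$. The conditional-mean piece $\E{\hH_\ell \mid \bc_\ell}$ is two-valued, equal to $\log(b_\ell/t)$ with probability $(1-q_\ell)^{r_\ell}$ and to $H_\ell$ otherwise, and so contributes at most $(1-q_\ell)^{r_\ell}\log^2(b_\ell/b_{\ell-1})$ to $\mathrm{Var}(\hH_\ell)$. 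Combining this with the inequality $1/c \leq 2/(c+1)$ on $\{\bc_\ell \geq 1\}$ and the standard identity $\E{1/(1+\bc_\ell)} = (1-(1-q_\ell)^{r_\ell+1})/((r_\ell+1)q_\ell)$ for $\bc_\ell \sim \mathrm{Bin}(r_\ell,q_\ell)$ gives
\[
q_\ell^2\,\mathrm{Var}(\hH_\ell) \;\leq\; q_\ell^2(1-q_\ell)^{r_\ell}\log^2(b_\ell/b_{\ell-1}) + \frac{2 q_\ell\,\log^2(b_\ell/b_{\ell-1})}{r_\ell+1}.
\]

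Summing the bias and variance contributions, I would finally use $q_\ell \leq 1$ in the second term, and the elementary inequality $q_\ell + (1-q_\ell)^{r_\ell} \leq 1$ (valid for all $r_\ell \geq 1$) to collapse $q_\ell(1-q_\ell)^{2r_\ell} + q_\ell^2(1-q_\ell)^{r_\ell} \leq q_\ell(1-q_\ell)^{r_\ell}$, which produces exactly the stated bound with constants $1$ and $2$. The main subtlety is that the degenerate case $\bc_\ell = 0$ appears in both the bias $B_\ell$ and the conditional-mean piece of the total-variance decomposition, and the clean collapse to the factor $1$ on the first term only works after combining the two—treating them separately incurs an extra factor of $2$. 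Everything else is routine binomial and conditional-distribution bookkeeping.
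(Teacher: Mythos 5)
Your proof is correct and arrives at exactly the stated constants, but it takes a genuinely different route from the paper. The paper never invokes independence of the $\hH_\ell$ across buckets: it applies Jensen's inequality to the convex map $x \mapsto x^2$ with weights $q_\ell$ (using $\sum_\ell q_\ell = 1$) to pass immediately to $\sum_\ell q_\ell\,\E{(\hH_\ell - H_\ell)^2}$, then bounds each per-bucket mean-squared error by conditioning on $\bc_\ell$: the $\bc_\ell = 0$ event contributes $(1-q_\ell)^{r_\ell}\log^2(b_\ell/b_{\ell-1})$, while on $\bc_\ell = c \geq 1$ the conditional MSE equals the conditional variance, which is at most $\log^2(b_\ell/b_{\ell-1})/c \leq 2\log^2(b_\ell/b_{\ell-1})/(c+1)$, and then the standard binomial identity for $\E{1/(1+\bc_\ell)}$ finishes. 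Your decomposition instead uses the fact that Algorithm~\ref{alg:logEstimator} draws fresh samples for each $\ell$, so the $\hH_\ell$ are mutually independent; this lets you split the squared error cleanly into a squared-bias term and a sum of variances with coefficients $q_\ell^2$ rather than $q_\ell$, which is sharper at that intermediate stage (you then give the extra factor $q_\ell$ back via $q_\ell \leq 1$ to match the target). The careful merge of the $\bc_\ell = 0$ contributions across the bias and the conditional-mean piece of the total-variance decomposition, via $q_\ell(1-q_\ell)^{2r_\ell} + q_\ell^2(1-q_\ell)^{r_\ell} \leq q_\ell(1-q_\ell)^{r_\ell}$, is exactly what is needed to avoid an extra factor of $2$; the paper avoids this issue entirely by never separating bias from variance. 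The paper's argument has the minor advantage of not needing independence across buckets and being two lines shorter; yours has the advantage of exposing the bias/variance structure and showing where the $q_\ell^2$ savings would come from if one cared to keep it.
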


\begin{lemma}\label{lem:second}
For any setting of $t = b_0 < \dots < b_{L} = X_{\max}$ and $\{ r_{\ell} \in \N : \ell \in [L] \}$, we have
\[ \E{ \left( \sum_{\ell=1}^{L-1} (\hq_{\ell} - q_{\ell})(\hH_{\ell} - \hH_{L})\right)^2} \leq O(1) \cdot \sum_{\ell=1}^{L-1} \log^2(b_{L} / b_{\ell-1}) \cdot \log^{(\ell)} k\cdot \frac{q_{\ell} (1-q_{\ell})}{r_{\ell}}.\]
\end{lemma}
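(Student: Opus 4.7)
The plan is to control $E := \E{\left(\sum_{\ell=1}^{L-1} A_\ell B_\ell\right)^2}$, where $A_\ell := \hq_\ell - q_\ell$ and $B_\ell := \hH_\ell - \hH_L$. Three ingredients will drive the argument: a deterministic envelope on $|B_\ell|$, the independence of the sample batches used for distinct buckets, and an explicit formula for $\mu_\ell := \E{A_\ell \hH_\ell}$.

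First, every realization of $\hH_\ell$ lies in $[\log(b_{\ell-1}/t), \log(b_\ell/t)]$ (including the default $\hH_\ell = \log(b_\ell/t)$ when $\bc_\ell = 0$), so $|B_\ell| \leq \log(b_L/b_{\ell-1})$ deterministically. Expanding the square gives $E = \sum_\ell \E{A_\ell^2 B_\ell^2} + \sum_{\ell \ne \ell'} \E{A_\ell B_\ell A_{\ell'} B_{\ell'}}$. Since $\E{A_\ell^2} = q_\ell(1-q_\ell)/r_\ell$, the diagonal sum is immediately bounded by $\sum_\ell \log^2(b_L/b_{\ell-1}) \cdot q_\ell(1-q_\ell)/r_\ell$, already matching the shape of the claim.

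For the off-diagonal terms, Algorithm~\ref{alg:logEstimator} draws a fresh batch of samples for each bucket, so $(A_\ell, \hH_\ell)$, $(A_{\ell'}, \hH_{\ell'})$, and $\hH_L$ are mutually independent. Expanding $A_\ell(\hH_\ell - \hH_L) A_{\ell'}(\hH_{\ell'} - \hH_L)$ into four summands and using $\E{A_\ell} = \E{A_{\ell'}} = 0$ collapses each cross term to $\mu_\ell \mu_{\ell'}$. Splitting on whether $\bc_\ell = 0$ and using that $\E{\hH_\ell \mid \bc_\ell > 0} = H_\ell$ (conditionally the averaged values are i.i.d.\ copies of $\log(\bX'/t)$ given $\bX' \in I_\ell$) yields $\mu_\ell = q_\ell(1-q_\ell)^{r_\ell}\bigl(H_\ell - \log(b_\ell/t)\bigr)$, and hence $|\mu_\ell| \leq q_\ell(1-q_\ell)^{r_\ell}\log(b_\ell/b_{\ell-1})$. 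Summing off-diagonal entries yields $\sum_{\ell \ne \ell'} \mu_\ell \mu_{\ell'} \leq \bigl(\sum_\ell |\mu_\ell|\bigr)^2$.

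The remaining task is to absorb $\bigl(\sum_\ell |\mu_\ell|\bigr)^2$ into the stated per-term bound. The damping factor $(1-q_\ell)^{2r_\ell}$ in $\mu_\ell^2$ is the key lever: applying $x e^{-2x} = O(1)$ with $x = r_\ell q_\ell$ yields $r_\ell \mu_\ell^2/(q_\ell(1-q_\ell)) \lesssim \log^2(b_\ell/b_{\ell-1}) \leq \log^2(b_L/b_{\ell-1})$, matching the per-term shape. One then applies weighted Cauchy--Schwarz $\bigl(\sum_\ell |\mu_\ell|\bigr)^2 \leq \bigl(\sum_\ell w_\ell\bigr)\sum_\ell \mu_\ell^2/w_\ell$ with weights $\{w_\ell\}$ chosen so that $\sum_\ell w_\ell = O(1)$ and $1/w_\ell$ reproduces the $\log^{(\ell)} k$ factor in the claim. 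I expect this last step to be the main obstacle: a uniform choice of weights loses a factor of $L$, and producing the sharper per-term factor $\log^{(\ell)} k$ requires weights tuned to the super-geometric growth of the bucket boundaries in tandem with the exponential damping by $(1-q_\ell)^{r_\ell}$.
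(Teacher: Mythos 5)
Your argument is correct, but you take a genuinely different (and considerably more elaborate) route than the paper. The paper's proof is a one-liner: apply Cauchy--Schwarz \emph{directly} inside the square,
\[
\Bigl(\sum_{\ell=1}^{L-1} A_\ell B_\ell\Bigr)^2
\le \Bigl(\sum_{\ell=1}^{L-1}\tfrac{1}{\log^{(\ell)}k}\Bigr)\Bigl(\sum_{\ell=1}^{L-1} A_\ell^2 B_\ell^2\,\log^{(\ell)}k\Bigr),
\]
take expectations, bound $|B_\ell|\le\log(b_L/b_{\ell-1})$ deterministically, and use $\E{A_\ell^2}=q_\ell(1-q_\ell)/r_\ell$ together with $\sum_\ell 1/\log^{(\ell)}k=O(1)$. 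There is no diagonal/off-diagonal decomposition, no use of independence across buckets, and no computation of $\mu_\ell=\E{A_\ell\hH_\ell}$. Your approach instead expands the square, handles the diagonal as the paper would, and for the off-diagonal exploits $\E{A_\ell}=0$ together with the independence of per-bucket batches to reduce the cross terms to $\mu_\ell\mu_{\ell'}$, derives the clean formula $\mu_\ell=q_\ell(1-q_\ell)^{r_\ell}(H_\ell-\log(b_\ell/t))$, and then uses $r_\ell q_\ell(1-q_\ell)^{2r_\ell-1}=O(1)$ plus weighted Cauchy--Schwarz. This buys you a quantitatively stronger bound (your diagonal contribution has no $\log^{(\ell)}k$ factor at all, and your off-diagonal term carries exponential $(1-q_\ell)^{2r_\ell}$ damping that the paper's bound discards), but it is strictly more work for a lemma where the paper only needs the weaker statement.

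One calibration remark: the step you flag as ``the main obstacle'' is in fact the easiest part and is exactly what the paper does. Taking $w_\ell=1/\log^{(\ell)}k$ works with no tuning to the bucket boundaries or to the $(1-q_\ell)^{r_\ell}$ damping: the only facts needed are $\sum_\ell 1/\log^{(\ell)}k=O(1)$ (iterated exponentials grow fast enough) and $1/w_\ell=\log^{(\ell)}k$ matching the target. The lemma statement is written with $\log^{(\ell)}k$ precisely so this off-the-shelf weighting closes the argument. So your plan already succeeds; the hesitation in the final paragraph is unwarranted, and if anything you could have skipped the entire $\mu_\ell$ computation and applied the same weighted Cauchy--Schwarz directly to $\sum_\ell A_\ell B_\ell$ as the paper does.
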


Given the above two lemmas, we can conclude the proof of Lemma~\ref{lem:square-deviation} assuming Lemma~\ref{lem:first} and Lemma~\ref{lem:second}. To upper bound (\ref{eq:square-deviation}), we first apply (\ref{eq:divide-val}) and then apply Lemma~\ref{lem:first} and Lemma~\ref{lem:second}.
\begin{align*}
&\E{ \left( \sum_{\ell=1}^L \hq_{\ell} \cdot \hH_{\ell} - \sum_{\ell=1}^L q_{\ell} \cdot H_{\ell} \right)^2 } = \E{ \left(\sum_{\ell=1}^{L-1} \left(\hq_{\ell} - q_{\ell}\right) \left(\hH_{\ell} - \hH_{L}\right) + \sum_{\ell=1}^{L} q_{\ell} \left( \hH_{\ell} - H_{\ell}\right)\right)^2} \\
&\qquad \qquad \leq 4 \cdot \E{  \left(\sum_{\ell=1}^{L-1} \left(\hq_{\ell} - q_{\ell}\right) \left(\hH_{\ell} - \hH_{L}\right) \right)^2 } + 4 \cdot \E{ \left(\sum_{\ell=1}^{L} q_{\ell} \left( \hH_{\ell} - H_{\ell}\right)\right)^2} \\
&\qquad\qquad \leq O(A + B + C),
\end{align*}
where we have
\begin{align*}
A &= \sum_{\ell=1}^L q_{\ell} (1 - q_{\ell})^{r_{\ell}} \cdot \log^2(b_{\ell} / b_{\ell-1}) \qquad\qquad B = \sum_{\ell=1}^L \dfrac{\log^2(b_{\ell} / b_{\ell-1})}{r_{\ell}+1} \\
C &= \sum_{\ell=1}^{L-1} \log^2(b_L / b_{\ell-1}) \cdot\log^{(\ell)} k \cdot \dfrac{q_{\ell} (1 - q_{\ell})}{r_{\ell}}
\end{align*}
Hence, we consider the following setting of parameters, where we let $L = \log^* k$,\footnote{Recall that $\log^{*} z$ is the number of iterated logarithms (base $2$) before the result is less than or equal to $1$.} such that we have $b_0 = t$, and
\begin{align}
\forall \ell \in \{1, \dots, L-1 \}, \quad b_{\ell} \eqdef \dfrac{tk}{(\log^{(\ell)} k)^4} \quad,\quad b_L \eqdef \dfrac{tk}{\eps} \quad\text{and let}\quad r_{\ell} \eqdef \dfrac{\log^2(b_{L} / b_{\ell-1}) \log^{(\ell)} k}{\eps^2}.  \label{eq:param-settings}
\end{align}

\begin{lemma}
For the above setting of parameters, $A$, $B$, and $C$ are at most $O(\eps^2)$.
\end{lemma}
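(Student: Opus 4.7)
The plan is to plug the chosen parameter values into the three sums $A$, $B$, $C$ and bound each by $O(\eps^2)$ via term-by-term estimates that collapse to convergent geometric-style series. I would treat them in the order $C$, $B$, $A$, because $C$ demonstrates the cleanest cancellation, $B$ carries the genuine work, and $A$ then falls out of $B$ by an elementary inequality.

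For $C$, the key observation is that the definition $r_\ell = \log^2(b_L/b_{\ell-1})\,\log^{(\ell)} k / \eps^2$ was engineered precisely so that $\log^2(b_L/b_{\ell-1})\,\log^{(\ell)} k / r_\ell = \eps^2$. Substituting this into each summand of $C$ leaves $\eps^2\, q_\ell (1-q_\ell) \leq \eps^2 q_\ell$, and since the events $\{\bX' \in I_\ell\}$ partition the sample space the $q_\ell$'s are probabilities of disjoint events, so $\sum_{\ell} q_\ell \leq 1$. Hence $C \leq \eps^2$ without any further work.

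For $B$, I would first use $b_\ell \leq b_L$ to replace $\log^2(b_\ell/b_{\ell-1})$ with the larger $\log^2(b_L/b_{\ell-1})$, so that each summand collapses via the definition of $r_\ell$ into $\eps^2 / \log^{(\ell)} k$. The remaining task is to show $\sum_{\ell=1}^L 1/\log^{(\ell)} k = O(1)$. Since $\log^{(\ell-1)} k = 2^{\log^{(\ell)} k}$, the sequence of iterated logarithms decays super-exponentially with $\ell$, so consecutive reciprocals grow super-exponentially; the sum is therefore dominated by its last term, which is $\Theta(1)$ under the choice $L = \log^* k$. This gives $B = O(\eps^2)$. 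For $A$, the elementary inequality $q(1-q)^{r} \leq q\,e^{-qr} \leq 1/(er)$ (obtained by maximizing $x \mapsto xe^{-rx}$ at $x=1/r$) bounds each summand of $A$ by $(1/e)\log^2(b_\ell/b_{\ell-1})/r_\ell$, so $A \leq O(B) = O(\eps^2)$.

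The main obstacle will be the sum in $B$: one must carefully verify that the reciprocal-of-iterated-logarithm series is $O(1)$, given that its terms grow toward the tail rather than decay. Once that is pinned down, the rest of the calculation is immediate: $C$ is essentially a one-line cancellation, and $A$ follows from the same bound via the $q(1-q)^r \leq 1/(er)$ inequality.
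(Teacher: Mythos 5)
Your bounds on $C$ and $B$ follow the paper's proof essentially verbatim: for $C$ the definition of $r_\ell$ is engineered so that $\log^2(b_L/b_{\ell-1})\log^{(\ell)}k/r_\ell = \eps^2$, after which $\sum_\ell q_\ell(1-q_\ell) \le \sum_\ell q_\ell = 1$ finishes; for $B$ you bound $\log^2(b_\ell/b_{\ell-1}) \le \log^2(b_L/b_{\ell-1})$ so that the $\ell$-th summand collapses to $\eps^2/\log^{(\ell)}k$, and then invoke the rapid decay of the iterated logarithm, which is precisely the paper's Equation~(\ref{eq:summing}). (One small caveat you share with the paper: the $\ell=L$ term of $\sum_\ell 1/\log^{(\ell)}k$ is $1/\log^{(L)}k$, which need not be $\Theta(1)$ since $\log^{(L)}k\le 1$ by the definition of $\log^*$; the paper's reindexed sum quietly stops at $\log^{(L-1)}k$ and handles $\ell=L$ via the extra $\log(1/\eps)$ in $\log(b_L/b_{L-1})$, and your phrasing ``the last term is $\Theta(1)$'' inherits the same imprecision.)

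The genuine divergence is your treatment of $A$, which is different from the paper's and cleaner. The paper partitions $[L]$ by whether $q_\ell$ is below a threshold $\approx \eps^2/\log^{2+0.01}(b_\ell/b_{\ell-1})$: small $q_\ell$ makes $q_\ell\log^2(b_\ell/b_{\ell-1})$ directly tiny, while large $q_\ell$ makes $(1-q_\ell)^{r_\ell}\le \exp(-q_\ell r_\ell)$ exponentially small, and the latter case explicitly leans on the $\log^2(1/\eps)$ slack built into $r_\ell$ to overcome the $\log^2(b_\ell/b_{\ell-1})$ factor. You short-circuit all of this with the pointwise bound $q(1-q)^r \le q e^{-qr} \le 1/(er)$, which gives $A_\ell \le \log^2(b_\ell/b_{\ell-1})/(e\,r_\ell) \le \tfrac{r_\ell+1}{e\,r_\ell}\,B_\ell \le \tfrac{2}{e}B_\ell$, so $A \le \tfrac{2}{e}B = O(\eps^2)$ by what you already proved for $B$. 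This avoids the case split entirely and, as a bonus, shows $A=O(B)$ for any admissible choice of the $b_\ell$ and $r_\ell$, without needing the extra $\log^2(1/\eps)$ factor in $r_\ell$ that the paper's argument explicitly invokes.
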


\begin{proof}
We go through each of the above terms. For $C$, we have that $\sum_{\ell=1}^{L} q_{\ell} = 1$, so $C \leq \eps^2 \sum_{\ell=1}^{L-1} q_{\ell}(1 - q_{\ell}) \leq \eps^2$.
For $B$, we have
\begin{align}
B &\leq \sum_{\ell=1}^{L} \log(b_{\ell} / b_{\ell-1}) / r_{\ell} \leq \eps^2 \sum_{\ell=1}^L \dfrac{1}{\log(b_L / b_{\ell-1})}  \leq \eps^2 \sum_{h=1}^L \dfrac{1}{\log^{(L-h)} k}= O(\eps^2).
\end{align}
The last remaining thing to bound is $A$. Here, we consider a large constant $c_0$ and divide the set $[L]$ into $G = \{ \ell \in [L] : q_{\ell} \leq c_0\eps^2 / (\log^{2+0.01}(b_{\ell} / b_{\ell-1})) \}$ and $\overline{G} = [L] \setminus G$. Then, we have
\begin{align*}
A &\leq \sum_{\ell \in G} q_{\ell} \cdot \log^2(b_{\ell} / b_{\ell-1}) + \sum_{\ell \in \overline{G}} \left( 1 - q_{\ell}\right)^{r_{\ell}} \log^2(b_{\ell} / b_{\ell-1}) \\
	&\leq 2c_0\eps^2\sum_{\ell=1}^L\frac{1}{(\log^{(\ell)}k)^{0.01}} + \sum_{\ell \in \overline{G}} \exp\left(-q_{\ell} \cdot r_{\ell} \right) \cdot \log^2(b_{\ell} / b_{\ell-1})
\end{align*}
We note that the first summand on the right-hand side is also at most $O(\eps^2)$, by a similar argument to that of (\ref{eq:summing}). In particular, the sequence of iterated exponentiation grows quickly (looking at the summands in reverse order), so the summation of $1 / (\log^{(\ell)} k)^{0.01}$ is at most a constant. Finally, it remains to show the last part, which will also follow from the fact that the sum of iterated exponentiation converges, and the fact that $r_{\ell}$ has an additional dependence on $\log(1/\eps)$:
\begin{align*}
\sum_{\ell \in \overline{G}} \exp\left(- q_{\ell} \cdot r_{\ell} \right) \cdot \log^2(b_{\ell} / b_{\ell-1}) &\leq \sum_{\ell=1}^L \exp\left( - c_0(\log^{(\ell)} k)^{0.99} \log^2(1/\eps)\right) (\log^{(\ell)} k)^2 \leq \eps^{\Omega(c_0)}.
\end{align*}
\end{proof}

Before going on to prove Lemma~\ref{lem:first} and Lemma~\ref{lem:second}, we give a bound on the expected sample complexity.

\begin{lemma}
The expected sample complexity of the algorithm is  $O(k\log^4(1/\eps) / \eps^2)$.
\end{lemma}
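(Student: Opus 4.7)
The plan is to compute the expected number of samples stage by stage and sum. In stage $\ell$ there are $r_\ell$ independent iterations, each of which first draws $\bi \sim \DD$ (one sample) and then draws at most $b_\ell$ additional samples, terminating early when $t$ copies of $\bi$ appear. Conditional on $\bi = i$, the number of additional draws is $\min(\bX,b_\ell)$ for $\bX \sim \NB(t,p_i)$, so by Jensen (or just the trivial bound on the expectation of a capped random variable) its expectation is at most $\min(\E{\bX}, b_\ell) = \min(t/p_i,b_\ell)$. Averaging over $i$ and using $\sum_i p_i \cdot (t/p_i) = tk$ on one hand and $\sum_i p_i \cdot b_\ell = b_\ell$ on the other, I get
\[
\E{\text{samples in one iter of stage }\ell} \;\le\; 1 + \sum_i p_i \min(t/p_i, b_\ell) \;\le\; 1 + \min(tk,\, b_\ell).
\]

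Next I plug in the parameter settings from~(\ref{eq:param-settings}). For $\ell < L$ we have $b_\ell = tk/(\log^{(\ell)}k)^4 \le tk$, so $\min(tk,b_\ell) = b_\ell$; a short calculation of $b_L/b_{\ell-1}$ gives $\log(b_L/b_{\ell-1}) = O(\log^{(\ell)}k + \log(1/\eps))$, hence
\[
r_\ell \cdot b_\ell \;=\; O\!\left( \frac{tk}{\eps^2 \log^{(\ell)} k} + \frac{tk\log^2(1/\eps)}{\eps^2\,(\log^{(\ell)} k)^3}\right).
\]
For $\ell = L$ we have $b_L = tk/\eps > tk$, so $\min(tk,b_L) = tk$; since $\log^{(L-1)}k = O(1)$ and $\log^{(L)}k \le 1$, we get $r_L = O(\log^2(1/\eps)/\eps^2)$ and $r_L \cdot tk = O(tk\log^2(1/\eps)/\eps^2)$.

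The last step is to sum over $\ell$, using the fact that iterated logarithms grow like a tower: reading the sequence $\log^{(L)}k,\log^{(L-1)}k,\log^{(L-2)}k,\ldots$ from right to left it is bounded below by $1,2,4,16,\ldots$, so $\sum_{\ell=1}^{L} 1/\log^{(\ell)}k$ and $\sum_{\ell=1}^{L} 1/(\log^{(\ell)}k)^3$ are each $O(1)$ (this is the same tower estimate already used in the preceding lemma to bound $B$ and the good set $G$ inside $A$). Combining everything yields a total expected sample complexity of $O(tk\log^2(1/\eps)/\eps^2)$, which is $O(k\log^4(1/\eps)/\eps^2)$ since $t = \Theta(\log^2(1/\eps))$. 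The only mildly technical step is the per-iteration bound (requiring that one simultaneously exploits $\sum p_i = 1$ and $\sum p_i (t/p_i) = tk$ to get $\min(tk,b_\ell)$ rather than the naive $tk$); the rest is routine arithmetic on the parameter schedule.
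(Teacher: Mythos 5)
Your argument is correct and essentially matches the paper's: the paper likewise bounds each iteration of stage $\ell<L$ by the deterministic cap $b_\ell$ and the last stage by the expected negative-binomial cost $tk$, then sums via the iterated-logarithm tower estimate. Your $\min(tk,b_\ell)$ formulation simply states the two cases uniformly (it reduces to $b_\ell$ for $\ell<L$ and $tk$ for $\ell=L$), so this is the same proof, not a different one.
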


\begin{proof}
For the intervals $\ell = \{ 1, \dots, L-1\}$, we always spend $r_{\ell}$ tries to determine whether a sample falls within a particular interval. Note that we take one sample to determine $\bi \sim \mathcal{D}$, and then we take at most $b_{\ell}$ samples. Therefore, the sample complexity for these is
\[ \sum_{\ell=1}^{L-1} r_{\ell} \cdot b_{\ell} = \dfrac{tk\log^2(1/\eps)}{\eps^2} \sum_{\ell=1}^{L-1} \dfrac{1}{\log^{(\ell)} k} \leq kt \cdot O(\log^2(1/\eps)/\eps^2), \]
where we used (and will continue to use) the fact that for any positive $\delta > 0$, we have
\begin{align}
 \sum_{\ell=1}^{L} \dfrac{1}{(\log^{(\ell)} k)^{\delta}} = \sum_{h=0}^{L} \dfrac{1}{(\log^{(L-h)} k)^{\delta}} = O(1). \label{eq:summing}
 \end{align}
In particular, for any $z > 1$ where $L = \log^* z$, 
\[ \log^{(L-h)} z > \underbrace{\exp\left(\exp\left(\dots \exp(1)\dots \right)\right)}_{\text{$h$ times}}. \]
In particular, the series in (\ref{eq:summing}) has denominators growing faster than geometric growth, so the sum is dominated by the largest value which is a constant. Finally, it remains to bound the expected sample complexity of the bucket $L$. Here, we note 
\[ r_L = \dfrac{O(1)}{\eps^2} \cdot \log^2\left( \frac{\log^{(L-1)} k}{\eps} \right) \leq O\left(\frac{\log^2(1/\eps)}{\eps^2}\right). \]
Therefore, the expected sample complexity incurred in the interval $L$ is
\[ r_{L} \cdot \sum_{i=1}^k q_i \cdot \dfrac{t}{q_i} = O(k\log^4(1/\eps) / \eps^2).\]
\end{proof}

We now give the proofs of the two lemmas.

\begin{proof}[Proof of Lemma~\ref{lem:first}]
Since the values of $q_1,\dots, q_{\ell}$ sum to $1$, we apply convexity of squaring and linearity of expectation to say
\begin{align*}
\E{ \left( \sum_{\ell=1}^L q_{\ell} \left( \hH_{\ell} - H_{\ell}\right)^2 \right)} \leq \sum_{\ell=1}^L q_{\ell} \E{ \left( \hH_{\ell} - H_{\ell}\right)^2} ,
\end{align*}
so it remains to bound the expected square of $\hH_{\ell} - H_{\ell}$. Recall the definition of $\hH_{\ell}$: we take a sample $\bi \sim \mathcal{D}$, then $\bX \sim \NB(t, p_{\bi})$ to check whether $\bX' \in I_{\ell}$; if so, we increment $\bc_{\ell}$ and use $\log(\bX' / t)$ as an estimate for $H_{\ell}$. Crucially, if we condition on any non-zero value of $\bc_{\ell}$, $\hH_{\ell}$ is an unbiased estimator given by averaging $\bc_{\ell}$ values, all of which are bounded in the interval $[\log(b_{\ell-1}/t), \log(b_{\ell} / t)]$. If $\bc_{\ell} = 0$, the estimator sets $\hH_{\ell}$ to $\log(b_{\ell}/t)$. In particular, since for all non-zero positive integers, $1/\bc \leq 2 / (\bc + 1)$, we may write
\begin{align*}
\E{ \left( \hH_{\ell} - H_{\ell} \right)^2 } &\leq \Pr{ \bc_{\ell} = 0 }\cdot \log^2(b_{\ell} / b_{\ell-1}) + 2 \E[\bc_{\ell}]{ \dfrac{(\log(b_{\ell}/t) - \log(b_{\ell-1} / t)^2}{\bc_{\ell} + 1} }.
\end{align*}
The first summand on the left-hand side is exactly $(1-q_{\ell})^{r_{\ell}} \log^2(b_{\ell} / b_{\ell-1})$. The second term is exactly
\[ 2 \log^2(b_{\ell}/ b_{\ell-1}) \E[\bc_{\ell}]{ \dfrac{1}{1+\bc_{\ell}}} \leq \dfrac{2\log^2(b_{\ell} / b_{\ell-1})}{q_{\ell} (r_{\ell}+1)}, \]
since $\bc_{\ell}$ is distributed as a binomial $\mathsf{Bin}(q_{\ell}, r_{\ell})$. The proof of this last inequality appears as Lemma~6 in \cite{AcharyaBIS19}, which we reproduce below for convenience:
\begin{align*}
\E[\bc \sim \mathsf{Bin}(q, r)]{ \dfrac{1}{1 + \bc}}&= \sum_{l=0}^{r} \binom{r}{l} q^{l}(1-q)^{r-l} \cdot \dfrac{1}{l+1} =  \sum_{l=0}^{r} \binom{r}{l} q^{l}(1-q)^{r-l} \cdot \dfrac{1}{r+1} \cdot \dfrac{r+1}{l+1} \\
	&= \dfrac{1 - (1 - q)^{r}}{q(r+1)} \leq \frac{1}{q(r+1)}.
\end{align*}
\end{proof}

\begin{proof}[Proof of Lemma~\ref{lem:second}]
Here, we write
\begin{align*}
\E{ \left( \sum_{\ell=1}^{L-1} (\hq_{\ell} - q_{\ell})(\hH_{\ell} - \hH_{L}) \right)^2} &\leq \E{ \left( \sum_{\ell=1}^{L-1} \dfrac{1}{\log^{(\ell)} k} \right)\left( \sum_{\ell=1}^{L-1} (\hq_{\ell} - q_{\ell})^2 (\hH_{\ell} - \hH_{L})^2 \cdot \log^{(\ell)} k \right)} \\
&\leq \left( \sum_{\ell=1}^{L-1} \dfrac{1}{\log^{(\ell)} k} \right) \sum_{\ell=1}^{L-1} \E{ (\hq_{\ell} - q_{\ell})^2 } \cdot \log^2(b_L / b_{\ell-1}) \log^{(\ell)} k
\end{align*}
by first multiplying and dividing by $\log^{(\ell)} k$, and then applying the Cauchy-Schwarz inequality. By the arguments we've already provided (see, for instance (\ref{eq:summing})), we have $\sum_{\ell=1}^{L-1} 1 /\log^{(\ell)} k = O(1)$, so it remains to bound the expectation square of $\hq_{\ell} - q_{\ell}$. However, we know that $\hq_{\ell}$ is the average of $r_{\ell}$ Bernoulli random variables, each set to $1$ with probability $q_{\ell}$. Hence, we have
\[ \E{ (\hq_{\ell} - q_{\ell})^2 } = \dfrac{q_{\ell}(1 - q_{\ell})}{r_{\ell}}. \]
\end{proof}

\ignore{\paragraph{Setting the parameters:} We set the number of intervals $L$ to be $\log^* (k)$.
That is, the number of times one needs to take $log$ in base two until it reaches a number that is at most one.
We set the values of $b_\ell$'s as follows:  
\begin{align*}
	& b_0 = t
	\\ & b_\ell = \frac{t\,k}{(\log^{(\ell)}(k))^3} \quad\quad\quad \ell \in [L-1]
	\\ & b_L = \Xmax = \frac{6\,t\,k}{\epsilon}\,.
\end{align*}
Here, $\log^{(\ell)}(.)$ indicates taking logarithm in base two $\ell$ times.
Moreover, we define $r_\ell$'s as follows:
\begin{align*}
	& r_\ell = \frac{117735\cdot\left(\log b_L - \log b_{\ell-1}\right)^{21/8}}{\epsilon^2}  \quad\quad\quad \ell \in [L-1]
	\\ & 
	r_L = r_{L-1}\,.
\end{align*}
Note that we have not spend any effort in optimizing a constant here

\begin{theorem}
	Given the parameters above, Algorithm~\ref{alg:logEstimator} uses $\Theta((k/\epsilon^2)\polylog(1/\epsilon))$ samples, and with  probability at least $0.95$, it output $\hH$ such that
	$$\left|\log(X/t) - \hH \right| \leq \epsilon\,.$$
	Moreover, one can implement it with constant many words of memory.  \amnote{I think the displayed equation above should have an expectation around the $\log$ or just use $H$}
\end{theorem}

\begin{proof}
	Our proof has three main parts: first, we analyze the error of our statistic. Second, we analyze the sample complexity of the algorithm. And at the end, we consider the memory usage of our algorithm. 

\paragraph{Error:} In this part, we analyze the error of our estimate. Our goal is to show that $|H - \hH|$ is at most $\epsilon$ with high probability. As the first step, we can write this error in terms of two terms: first, the error caused by replacing $X$ with $X'$; second, the error caused by using samples to estimate $\hq_\ell$'s and $\hH_\ell$'s.
\begin{align*}
	\left|H - \hH\right| & \leq |H - \tilH| + |\tilH - \hH| 
\end{align*}
In the following lemma, we show that by setting $\Xmax =6tk/\epsilon$, one can make sure the first term is at most $\epsilon/3$.

We focus on the second term: $\tilH - \hat H$. It is not hard to see that:
\begin{align*}
	\left|\tilH - \hH\right| & = \left|\E[i,X]{\log(X'/t)} - \hH \right|
	=\left|
	\sum_{\ell=1}^L q_\ell \cdot H_\ell - \hq_\ell \cdot \hH_\ell
	\right|
	 = \left| \sum_{\ell=1}^{L}(q_\ell \cdot H_\ell - q_\ell \cdot \hH_\ell) + (q_\ell \cdot \hH_\ell - \hq_\ell \cdot \hH_\ell)\right|
	\\ & \leq  \sum_{\ell=1}^{L}q_\ell \cdot \left|H_\ell -  \hH_\ell\right| + 
		\left| \sum_{\ell=1}^L (q_\ell - \hq_\ell) \cdot \hH_\ell \right|
\end{align*}
Now, we replace $q_L$ and $\hq_L$ by $q_L = 1-\sum_{\ell=1}^{L-1} q_\ell$ and $\hq_L = 1-\sum_{\ell=1}^{L-1} \hq_\ell$ respectively. Note that using this estimator for $\hq_L$, instead of estimating $\hq_L$ by $c_L/r_L$ plays a crucial role in this proof (as well as the similar proof in~\cite{AcharyaBIS19}).  By this simple replacement, we get an error term of the form $\left|q_\ell - \hq_\ell \right|\cdot (\log b_L - \log b_{\ell-1})$ below. It implies that we can estimate $\hq_\ell$ with lower accuracy as $\ell$ gets larger. And, this is exactly where the {\em early stop} of sampling  comes into play. Estimating $\hq_\ell$ requires a lot less samples for small $\ell$, since we would only need to check if $X'$ is larger than $b_\ell$. Therefore, we can afford larger number of trials, $r_\ell$, for smaller $\ell$'s.
\begin{align*}
	\left|\tilH - \hH\right| 
	& = \sum_{\ell=1}^{L}q_\ell \cdot \left|H_\ell -  \hH_\ell\right| + 
	\left|
	\sum_{\ell=1}^{L-1} 
	(q_\ell - \hq_\ell) \cdot \hH_\ell + \left(1-\sum_{\ell=1}^{L-1} q_\ell\right) \cdot \hH_L - \left(1-\sum_{\ell=1}^{L-1} \hq_\ell\right) \cdot \hH_L
	\right|
	\\ 
	& \leq \sum_{\ell=1}^{L} q_\ell \cdot \left|H_\ell -  \hH_\ell\right| + \sum_{\ell=1}^{L-1} 
	\left|q_\ell - \hq_\ell \right|\cdot \left|\hH_L - \hH_\ell\right| 
	\\ 
	& \leq \sum_{\ell=1}^{L} q_\ell \cdot \left|H_\ell -  \hH_\ell\right| + \sum_{\ell=1}^{L-1} 
	\left|q_\ell - \hq_\ell \right|\cdot (\log b_L - \log b_{\ell-1}) 
\end{align*}

Next, we show that with a probability of at least 0.99, the two terms in the last line are at most $\epsilon/3$. Given these bounds, we can conclude that $\left|H - \hH\right|$ is at most $\epsilon$ with high probability as desired. 

For the first error term, we prove:
\begin{equation}\label{eq:termOne}
	\Pr{\sum_{\ell=1}^{L}q_\ell \cdot \left|H_\ell -  \hH_\ell\right|  \geq \epsilon/3} < 0.01\,.
\end{equation}
We start by proving a concentration bound for $\hH_\ell$. Acharya et al. have shown a randomized version of Hoeffding inequality which the number of samples to estimate a quantity comes from a binomial distribution. We stated this lemma below. See Lemma~3 in~\cite{AcharyaBIS19}.
\begin{lemma}[Random Hoeffding's Inequality~\cite{AcharyaBIS19}]
Let $C \sim\bin(r, q)$ be a binomial random variable with success probability $q$ and $r$ trials. Let $Y_1, Y_2, \ldots, Y_C$ be $C$ independent random variables in the range $[a, b]$. Let $\overline{Y} = (Y_1 + \cdots + Y_C)/C$ denote the average of these $C$ random variables.   Then, we have:
	 $$\Pr{q \cdot \left|\overline{Y} - \E{\overline{Y}}\right| \geq s} \leq 3 \exp\left(-\frac{r\,s^2}{8\,q\,(b-a)^2}\right)\,.$$
\end{lemma}
We use the setting of this lemma to show the concentration of $\hH_\ell$. Note that using our bucketing scheme our random variables are between $b_{\ell-1}$ and $b_\ell$. And, we get a sample in interval $\ell$ with probability $q_\ell$. Thus, we get:
$$\Pr{q_\ell \cdot \left|\hH_\ell - H_\ell\right| \geq s_\ell} \leq 3\exp\left(-\frac{r_\ell\,s_\ell^2 }{8 \,q_\ell (\log b_\ell - \log b_{\ell-1})^2}\right) \leq 3\exp\left(-\frac{r_\ell\,s_\ell^2 }{8 (\log b_\ell - \log b_{\ell-1})^2}\right)\,.$$

We have the above bounds for each $\ell$. However, our goal is to find a sequence of $s_\ell$'s to obtain a concentration bound for the sum as in Equation~\eqref{eq:termOne}. 
We define the $r_\ell$'s to be large enough, so we can use Lemma~\ref{lem:Svals} which states there exists a 
sequence $\{s_\ell \}_{\ell=1}^L$\,, with the following properties:

\begin{enumerate}
	\item[$i)$] $\sum_{\ell=1}^L s_\ell \leq \frac{\epsilon}{3}\,,$
	\item[$ii)$] $
	\sum_{\ell=1}^{L} 3\exp\left(-\frac{r_\ell\,s_\ell^2 }{8 \, (\log b_\ell - \log b_{\ell-1})^2}\right) \leq 0.01\,.$
\end{enumerate}
It is not hard to see that one can drive Equation~\eqref{eq:termOne} using the above properties and the union bound:
\begin{align*}\Pr{\sum_{\ell=1}^L q_\ell \cdot \left|\hH_\ell - H_\ell\right| \geq \epsilon/3}   & \leq \Pr{\sum_{\ell=1}^L q_\ell \cdot \left|\hH_\ell - H_\ell\right| \geq \sum_{\ell=1}^L s_\ell}  
\\ & \leq \Pr{\exists \ell \in[L]: q_\ell \cdot \left|\hH_\ell - H_\ell\right| \geq s_\ell} 
\\ & \leq \sum_{\ell=1}^L 3\exp\left(-\frac{r_\ell\,s_\ell^2 }{8 (\log b_\ell - \log b_{\ell-1})^2}\right) \leq 0.01
\,.
\end{align*}

Now, we focus on the second error term. We aim to prove:
\begin{equation}\label{eq:termTwo}
	\Pr{\sum_{\ell=1}^{L-1} 
	\left|q_\ell - \hq_\ell \right|\cdot (\log b_L - \log b_{\ell-1})    \geq \epsilon/3} \leq 0.01
\end{equation}

Recall that we use a $r_\ell$ samples per bucket to estimate the weight of each bucket. We use the Hoeffding's inequality to show $\hq_\ell = c_\ell/r_\ell$ is close to its expectation, $q_\ell$. For every $\ell \in [L-1]$ and $t_\ell \in [0,1]$, we have:
\begin{align*}
	\Pr{\left|\frac{c_\ell}{r_\ell} - q_\ell\right| \leq t_\ell} \leq 2\exp\left(-2\,r_\ell \,t_\ell^2\right)
	\,.
\end{align*}

Given the values of $r_\ell$'s, we can use Lemma~\ref{lem:Tvals}, and find a sequence of $\{t_\ell\}_{\ell=1}^{L-1}$ with the following properties:
\begin{enumerate}
	\item[$i)$] $\sum_{\ell=1}^{L-1} t_\ell \cdot (\log b_L - \log b_{\ell-1}) \leq \epsilon/3\,,$
	\item[$ii)$] $
	\sum_{\ell=1}^{L-1} 2\exp\left(-2\,r_\ell \, t_\ell^2\right) \leq 0.01\,.$
\end{enumerate}

Now, with all these ingredients and the union bound, we can prove Equation~\eqref{eq:termTwo}:
\begin{align*}
	\Pr{\sum_{\ell=1}^{L-1} 
	\left|q_\ell - \hq_\ell \right|\cdot (\log b_L - \log b_{\ell-1})    \geq \epsilon/3}  & \leq  \Pr{\sum_{\ell=1}^{L-1} 
	\left|q_\ell - \hq_\ell \right|\cdot (\log b_L - \log b_{\ell-1})    \geq \sum_{\ell = 1}^{L-1} t_\ell \cdot (\log b_L - \log b_{\ell-1})} 
	\\ & \leq \Pr{\exists \ell \in[L-1]: \left|q_\ell - \hq_\ell \right|  \geq t_\ell}
	\\ & \leq \Pr{\exists \ell \in[L-1]: \left|q_\ell - c_\ell/r_\ell \right|  \geq t_\ell} 
	\\ & \leq \sum_{\ell=1}^{L-1} 2\exp\left(-2\,r_\ell \, t_\ell^2\right) \leq 0.01\,.
\end{align*}
Thus, by the union bound with probability at least 0.98 $\left|H - \hH\right|$ is at most $\epsilon$.

\paragraph{Number of samples:}
Note that we have $L = \Theta(\log^*k)$ many buckets. For each bucket, we draw $r_\ell$ samples, and we draw $X'$ using at most $b_\ell$ samples. For the very last bucket, instead of using $b_L$ as an upper bound for the number of samples, we use the upper bound on the expected number of samples we need to draw a negative binomial random variable; For a random $i \sim \DD$, we need $tk$ samples in expectation to draw $X$ (as well as for $X'$). Putting it all together, we have:
\begin{align*}
	\E{\text{\# samples}} & \leq \sum_{\ell=1}^{L-1} b_\ell \cdot r_\ell + t\, k\, r_L
	\\& = \sum_{\ell=1}^{L-1} \frac{t\,k}{(\log^{(\ell)}(k))^3} \cdot \Theta \left(\frac{\left(\log b_L - \log b_{\ell-1}\right)^{21/8}}{\epsilon^2}\right)  + t\, k\, r_L
	\,.
\end{align*}
Note that, we have:
\begin{align*}
	\log b_L - \log b_{\ell-1} & = \log \left(b_L/b_{\ell-1}\right) = \log \frac{(\log^{(\ell-1)}(k))^3}{\epsilon}=3 \, \log^{(\ell)}(k) + \log (1/\epsilon)\,. 
\end{align*}
Therefore, we can bound the expected number of samples as follows via Proposition~\ref{prop:convergentSeries}:
\begin{align*}
	\E{\text{\# samples}} & \leq 
	\frac{t\,k}{\epsilon^2} \cdot \Theta \left(
	\sum_{\ell=1}^{L-1} \frac{\left(\log^{(\ell)}(k)\right)^{21/8}}{(\log^{(\ell)}(k))^3}  + (\log(1/\epsilon))^{21/8}
	\right) + t\,k\cdot \Theta\left((\log^{(L)}(k))^{21/8} + (\log(1/\epsilon))^{21/8}\right)
	\\ &= \Theta(\polylog(1/\epsilon) (k/\epsilon^2))
	\,.
\end{align*}
Note that using Markov's inequality, we can assume that with probability at most 0.01, we would use more than $100\E{\text{\# samples}}$ samples. Thus, the sample complexity remains $\Theta(\polylog(1/\epsilon) (k/\epsilon^2))$ even in the worse case.

\end{proof}
}

\section{Conclusions}
We presented an algorithm for returning an additive $\epsilon$ approximation of the Shannon entropy of a distribution over $[k]$. The algorithm required $O(k\,\epsilon^{-2} \log^4 (1/\epsilon) )$ i.i.d.~samples from the unknown distribution and a constant number of words of memory. In terms of the $\epsilon$ dependence, this improves over the state-of-the-art \citep{AcharyaBIS19} by a factor $1/\epsilon$ in the sample complexity. More generally, we expect that the technique used, that of correcting the bias via low-degree polynomials will be useful in the context of the other inference problems in the data stream setting.

The main open problem is determining whether the sample complexity of our result is optimal. We conjecture that this is the case, up to the $\poly(\log(1/\eps))$ factors. Recall that without a memory constraint the sample complexity is known to be  $n = \Theta(\max\{ \epsilon^{-1} \cdot  k/\log(k/\epsilon), \epsilon^{-2} \log^2 k\})$ \citep{ValiantV17,ValiantV11,JiaoVHW15,WuY16}. To prove  a $\Omega(k/\epsilon^2)$ lower bound for the memory constrained version, we conjecture the following randomized process can be used to generate distributions over $[2k]$ that look alike to any constant space algorithm that uses $o(k/\epsilon^2)$ samples but they have {\em different } entropies. 

Suppose we have $k$ Bernoulli random variables with parameter $\alpha$: $Y_1, \ldots, Y_k$. And, we have $k$ Rademacher random variables $Z_1, \ldots, Z_k$ (that are $+1$ or $-1$ with probability $1/2$). We construct distribution $p$ in such a way that it is uniform over $k$ pairs of elements $(1, 2), (3, 4), \ldots, (2k-1, 2k)$. However, conditioning on pair $(2i-1, 2i)$, we may have a constant bias based on the random variable $Y_i$. And, we decide about the direction of the bias based on $Z_i$. More precisely, we set the probabilities in $p$ as follows: 
$$p_{2i-1} = \frac{1 + Y_i\cdot Z_i/4}{2k}\,,\quad \quad p_{2i} = \frac{1 - Y_i\cdot Z_i/4}{2k}\quad\quad\quad \forall i \in [k]\,.$$
Now, it is not hard to show that if we generate two distributions as above with $\alpha = (1 + \epsilon)/2$ and $\alpha = (1-\epsilon)/2$, then their entropies are $\Theta(\epsilon)$ separated with a constant probability. Thus, any algorithm that can estimate the entropy has to {\em distinguish}  $\alpha = (1 + \epsilon)/2$ from $\alpha = (1-\epsilon)/2$.  Intuitively, 
to learn $\alpha$, we would require to {\em determine} $\Omega(1/\epsilon^2)$ many of $Y_i$'s. Since we have only a constant words of memory, we cannot perform the estimation of the $Y_i$'s in parallels. Thus, any natural algorithm would require to draw $\Omega(k/\epsilon^2)$ samples.

\appendix

\section{Variance of $\SimpleEstimator$}

We now bound the variance of our estimator by $O(\log^2 k)$. Recall that the output of $\SimpleEstimator$ is given by $\log(\bX / t) - g(\bB_1,\dots, \bB_r)$, where the function $g$ is bounded. Since the variance we seek is $O(\log^2 k)$, it suffices to show that the variance of $\log(\bX / t)$ is $O(\log^2 k)$ with $i\sim \mathcal{D}$, since subtracting $g$ changes the estimate by at most a constant (see Lemma~\ref{lem:bound-on-g}).

\begin{lemma}\label{lem:tail2}
Let $\bi \sim \mathcal{D}$ and $\bX$ denote the number of independent trials from $\Ber(p_{\bi})$ before we see $t$ successes. Then, $\Var{\log (\bX/t)}=O(\log^2 k)$.
\end{lemma}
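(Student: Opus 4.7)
The plan is to bound the second moment $\E{\log^2(\bX/t)}$, which dominates the variance. The key observation is that since $\bX$ is the number of Bernoulli trials to collect $t$ successes, we always have $\bX \ge t$, so $\log(\bX/t) \ge 0$ and we do not need to worry about the square of a large negative contribution canceling with the mean.

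The main decomposition is to write $\log(\bX/t) = \log(1/p_{\bi}) + \log(\bY)$, where $\bY := \bX p_{\bi}/t$. By the inequality $(a+b)^2 \le 2a^2 + 2b^2$, it suffices to bound the two quantities
\[
\E[\bi]{\log^2(1/p_{\bi})} = \sum_{i=1}^k p_i \log^2(1/p_i)
\qquad \text{and} \qquad
\E[\bi, \bX]{\log^2 \bY}.
\]
For the first term, I would use a standard estimate: splitting the sum into indices $i$ with $p_i \ge 1/k$ (where $\log(1/p_i) \le \log k$ directly gives the bound) and indices with $p_i < 1/k$ (where monotonicity of $x \mapsto x\log^2(1/x)$ near $0$ yields $p_i \log^2(1/p_i) \le (1/k)\log^2 k$, summing to at most $\log^2 k$). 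This gives $\E{\log^2(1/p_{\bi})} = O(\log^2 k)$.

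For the second term, I would exploit the fact, established earlier in the paper (Section~\ref{sec:verifyYSubgamma}), that conditioned on $\bi$, $\bY$ has mean $1$ and is $(\alpha/\sqrt{t}, \beta/t)$-subgamma. I split on the event $\bY \ge 1/2$ versus $\bY < 1/2$. On the bulk event $\{\bY \ge 1/2\}$, the bound $|\log \bY| \le 2|\bY - 1|$ reduces things to the second subgamma moment of $\bY$, giving a contribution of $O(1/t) = O(1)$. On the tail event $\{\bY < 1/2\}$, I use the crude bound $|\log \bY| \le \log(1/p_{\bi})$ (which holds because $\bX \ge t$ forces $\bY \ge p_{\bi}$), together with the subgamma Chernoff bound $\Pr{\bY < 1/2 \mid \bi} \le \exp(-\Omega(t))$. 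Taking expectation over $\bi$ yields a contribution of $O(\log^2 k) \cdot \exp(-\Omega(t))$, which is $O(1)$ as long as $t = \Omega(\log\log k)$ (and certainly for our choice $t = \Theta(\log^2(1/\eps))$).

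Combining, $\Var{\log(\bX/t)} \le \E{\log^2(\bX/t)} = O(\log^2 k) + O(1) = O(\log^2 k)$, as desired. The most delicate step is the tail argument for $\log^2 \bY$: the logarithm blows up as $\bY \to 0$, so I cannot use a purely subgamma moment bound and must instead combine the deterministic bound $\bY \ge p_{\bi}$ with an exponential tail estimate on $\{\bY < 1/2\}$ before averaging over $\bi \sim \DD$.
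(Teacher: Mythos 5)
Your proof is correct, but it takes a genuinely different route from the paper's. The paper truncates: it sets $\Xmax=2kt$, writes $\log(\bX/t)=\log(\bX/\bX')+\log(\bX'/t)$ with $\bX'=\min(\bX,\Xmax)$, bounds the second term deterministically by $\log(2k)$, and controls $\E{\log^2(\bX/\bX')}$ via the elementary inequality $\log z\leq\sqrt{z-1}/\ln 2$ for $z\geq 1$ combined with $\E{\bX/\Xmax}=tk/\Xmax=O(1)$. You instead decompose $\log(\bX/t)=\log(1/p_{\bi})+\log\bY$ with $\bY=\bX p_{\bi}/t$, bound $\E{\log^2(1/p_{\bi})}=O(\log^2 k)$ by the standard entropy-style split at $p_i=1/k$, and control $\E{\log^2\bY}$ by reusing the subgamma structure of $\bY$ from Section~\ref{sec:verifyYSubgamma}: the bulk $\{\bY\geq 1/2\}$ gives a second-moment contribution $O(1/t)$, and the tail $\{\bY<1/2\}$ is handled by the deterministic lower bound $\bY\geq p_{\bi}$ together with the Chernoff tail $\exp(-\Omega(t))$. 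Both arguments are sound and comparably short. The paper's truncation is self-contained and makes no reference to $t$ at all, while yours elegantly recycles the $\bY$-subgamma machinery already developed for the bias analysis, making the two parts of the paper more unified. One small remark: your claim that the tail term is $O(1)$ ``as long as $t=\Omega(\log\log k)$'' is an unnecessary restriction for the lemma as stated; since $\exp(-\Omega(t))\leq 1$, the tail contribution is trivially $O(\log^2 k)$ for every $t\geq 1$, so your proof, like the paper's, in fact gives the bound $O(\log^2 k)$ with no constraint on $t$.
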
 

\begin{proof}
Let $\Xmax = 2 k t$, and consider the random variable $\bX' = \min \{ \bX, \Xmax \}$. Then 
\begin{align*}
\Var{\log(\bX / t)} &\leq \E{\left(\log(\bX/t) - \log(\bX' / t) + \log(\bX'/t) \right)^2}\\ 
    &\leq 2 \cdot \E{\left(\log(\bX/t) - \log(\bX' / t) \right)^2} + 2 \cdot \E{\log^2(\bX'/t)} \\
&\leq 2 \cdot \E{\log^2(\bX / \bX')} + 2\log^2(2k)\\
&\leq \frac{4}{\ln^2(2)} \cdot \E{\left(\sqrt{\frac{\bX}{\bX'} - 1}\right)^2} + 2 \log^2(2k),
\end{align*}
where we used that $\log(\bX' / t) \leq \log(2k)$ always, and that $\log(z) \leq \sqrt{z - 1} / \ln(2)$ for all $z \geq 1$. Then, 
\begin{align*}
\E{\dfrac{\bX}{\bX'} - 1} \leq \E{\dfrac{\bX}{\Xmax}} = \frac{1}{\Xmax} \sum_{i=1}^k p_i \cdot \frac{t}{p_i} = \frac{tk}{\Xmax} = 2.
\end{align*}
\end{proof}

\ignore{

\begin{proof}
For any $q \in [0, 1]$, let $A_q$ be the event the sample being tracked has probability $q$. Let $\tau_q=100t/q$. Then
\begin{align*}
\E{\log^2 (\bX/t)|A_q}&=  
\E{\ind\{\bX\leq \tau_q\} \log^2 (\bX/t)|A_q}+\E{\ind\{\bX\geq \tau_q\} \log^2 (\bX/t)|A_q} \\
&\leq  \log^2 (3/q) + \E{\ind\{\bX\geq \tau_q\} (\bX/ t)|A_q}
\end{align*}
We now analyze the second term:
\begin{eqnarray*}
\E{\ind\{\bX\geq \tau_q\} (\bX/t)|A_q}&= &\frac{1}{t}\sum_{i=\tau_q}^\infty \binom{i-1}{t-1} q^t (1-q)^{i-t} \cdot i 
\\ &=&   \frac{1}{t}
\binom{\tau_q-1}{t-1} q^t (1-q)^{\tau_q-t} \tau_q (1+\alpha_1+\alpha_2+\ldots )
\end{eqnarray*}
where 
\begin{eqnarray*}
\alpha_j &=&
\frac{\binom{\tau_q+j-1}{t-1}}{\binom{\tau-1}{t-1}} (1-q)^j \frac{\tau_q+j}{\tau_q}
\\ &= &
\frac{(\tau_q+j-1)\ldots (\tau_q)}{(\tau_q+j-t)\ldots (\tau_q-t+1)} (1-q)^j \frac{\tau_q+j}{\tau_q}
\\ &\leq & \left ( \frac{\tau_q}{\tau_q-t+1} \right )^j ((\tau_q+1)/\tau_q)^j (1-q)^j 
\\ &=&
\left ( \frac{\tau_q+1}{\tau_q-t+1} \right )^j (1-q)^j \\ 
&\leq &
\left ( 1+t/(\tau_q-t) \right )^j (1-q)^j \\
&\leq &
\left ( 1+q/2 \right )^j (1-q)^j \leq (1-q/2)^j \ .
\end{eqnarray*}
Therefore $1+\alpha_1+\alpha_2+\ldots \leq 2/q$ and so 
\begin{eqnarray*}
\E{\ind\{\bX\geq \tau_q\} \bX|A_q} \leq (2/q) \cdot \binom{\tau_q-1}{t-1} q^t (1-q)^{\tau_q-t} \tau_q
& \leq & 2 \cdot \left ( \frac{eq(\tau_q-1)}{t-1}\right )^{t-1} e^{-99t} \tau_q \\
& \leq & 2 \cdot (101 e)^{t-1} e^{-99t} \tau_q = O(1/q) \ . 
\end{eqnarray*}
Therefore, $\E{\log^2 (\bX/t)}\leq \sum_{i=1}^k p_i (\log^2 (3/p_i)+O(t/p_i))=O(\log^2 k)$.
\end{proof}
}

\section{Bit Complexity of Storing $g$ in $\SimpleEstimator$}\label{sec:bit-complexity} 

We verify the bit complexity of storing $g$. Then, notice that the algorithm needs to store the parameter $t$, an integer counter between $0$ and $t$, and a number $\bX$ which is at most the sample complexity (which is bounded in expectation by Fact~\ref{fact:sample-complexity}).

We turn to verifying that $g$ may be stored with bounded bit-complexity. Recall that, for $\eps \in (0,1)$, we set $t = \Theta(\log^2(1/\eps))$ and $r = \Theta(\log(1/\eps))$. Recall that $g$ is the linear function given by having $g(z, z^2, \dots, z^r)$ be the degree-$r$ Taylor expansion of $\log z$ centered at $1$. In order to show the coefficients of $g$ may be stored with bounded bit-complexity, we compute the degree-$r$ Taylor expansion of $\log z$ at $1$:
\begin{align*}
\sum_{i=1}^r \dfrac{(-1)^{i+1}}{i} \cdot (z - 1)^{i} &= \sum_{i=1}^r \sum_{j=0}^i \binom{i}{j} \cdot \frac{(-1)^{j-1}}{i} \cdot  z^j \\
& = \left(-\sum_{i=1}^r \binom{i}{j} \cdot \frac{1}{i} \right) + \sum_{j=1}^r z^j \left( \sum_{i = j}^{r} \binom{i}{j} \dfrac{(-1)^{j-1}}{i}\right).
\end{align*}
For any $j \in \{ 1, \dots, r\}$, the $j$-th coefficient of $g$ may be re-written as:
\begin{align*}
\sum_{i = j}^{r} \binom{i}{j} \dfrac{(-1)^{j-1}}{i} &= \dfrac{1}{r!} \sum_{i=j}^{r} \binom{i}{j} (-1)^{j-1} \cdot \frac{r!}{i},
\end{align*}
and $r! / i$ is always an integer at most $r^r$, and $\binom{i}{j}$ is an integer bounded by $r^r$; since we are summing at most $r$ such values, the coefficients have bit-complexity $O(r \log r)$. Similarly, a bound of $O(r \log r)$ on the bit-complexity of the $0$-th coefficient of $g$ follows from a similar argument.

\ignore{

\section{Technical Lemmas for Section \ref{sec:bucekting}}\label{app:bucketing-proofs}

\lemCutoffError*

\begin{proof}
By total law of probability, we have: 
\begin{equation}\label{eq:XXdiffmain}
	\begin{split}
		\E{\log(X/t) - \log(X'/t)} 
	& = \sum_{i\in[k]} p_i \cdot \left( \E{\log(X/t) - \log(X'/t) \mid i}\right)
	\end{split}
\end{equation}
For now, we fix $i$ and assume $X$ is a negative binomial random variable from $\NB(t, p_i)$.
We use the integral identity to compute the expectation:
\begin{equation}\label{eq:XXprimediffBound}
	\begin{split}
		\E[X \sim \NB(t, p_i)]{\log (X/t) - \log (X'/t)} & = \E[X \sim \NB(t, p_i)]{\log X - \log X'} 
		\\& = \int_{z\geq0} \Pr{\log X - \log X' > z}\, dz
		\\ & = \int_{z\geq0} \Pr{X > e^z X'}\, dz
		\\ & = \int_{z\geq0} \Pr{X > e^z \Xmax}\, dz
		\,.
	\end{split}
\end{equation}
Note that last equality is due to the following: The event of $X>e^z X'$ implies that $X > X'$; On the other hand, $X'$ is the minimum of $X$ and $\Xmax$. Thus, if $X' < X$, then it must be the case that $X'$ is equal to $\Xmax$. Next, we define a threshold $T$ as follows:
$$T \coloneqq \max\left(\ln \left(\frac{2\,\E{X}}{\Xmax}\right), 0\right) \,,$$
where $\E{X} = t/p_i$. We separate the above integral based on the threshold $T$.

\begin{equation}\label{eq:XXdiffcnt}
	\begin{split}
		& \E[X \sim \NB(t, p_i)]{\log (X/t) - \log (X'/t)}
		= 
		\int_{0}^{T} \Pr{X \geq e^z\Xmax}\, dz + \int_{z\geq T} \Pr{X \geq e^z\Xmax}\, dz
		\\ & \quad \quad 
		= T + \int_{z\geq T} \Pr{X \geq e^z\Xmax}\, dz
		\\ & \quad \quad 
		= T +  \int_{z\geq T} \Pr{X \geq e^{z - T}\max\left(e^{2\E{X}/\Xmax}, 1\right) \cdot \Xmax}\, dz
		\\ & \quad \quad 
		= T +  \int_{z\geq T} \Pr{X \geq e^{z - T}\max\left(2\E{X}, \Xmax \right)}\, dz
		\\ & \quad \quad 
		\leq T +  \int_{z\geq T} \Pr{X \geq e^{z - T}2\E{X}}\, dz
		\\ & \quad \quad 
		= 
		T +  \int_{u\geq 0} \Pr{X \geq 2\,e^{u}\,\E{X}}\, du
	\end{split}
\end{equation}
Observe that this separation let us 
decompose the expectation  in two terms: the first one only depends on how far $\E{X}$ is from $\Xmax$, and the second one depends on how $X$ may land far from its expectation. Now, to bound the probability in the second term, we use the tail bounds of negative binomials shown in Lemma~\ref{lem:tialBoundNB}. Since $2\,e^u$ is at least two, we have:
\begin{align*}
	\Pr{X \geq 2\,e^{u}\,\E{X}} \leq \exp\left(-t\,e^u /4\right)\,.
\end{align*}
Thus, we get:
\begin{equation*}
	\begin{split}
		\int_{u\geq0} \exp\left(-\left(t/4\right) \cdot e^u\right) \, du
		=\Gamma(0,t/4)
		\,,
	\end{split}
\end{equation*}
where $\Gamma(.,.)$ denotes the incomplete gamma function. The value of the gamma function when the first parameter is zero is bounded as follows
according to \cite{pinelis2020exact}:
\begin{align*}
	\Gamma(0, x) \leq e^{-x} \ln\left(\frac{x+1}{x}\right)\,.
\end{align*}
Since $ t/4 \geq 1/4$, $\ln((x+1)/x)$ is at most two. Therefore, we get:
\begin{align*}
	\int_{u\geq0} \exp\left(-\left(t/4\right) \cdot e^u\right) \, du
		=\Gamma(0,t/4) \leq 2 e^{-t/4} \leq \epsilon/2\,,
\end{align*}
where the last line holds for $t\geq 4\ln(4/\epsilon)$.
Continuing our calculation in Equation~\eqref{eq:XXdiffcnt}, we get:
\begin{align*}  
	\E[X \sim \NB(t, p_i)]{\log (X/t) - \log (X'/t)} 
	& \leq T + \epsilon/2 = \max\left(\ln \left(\frac{2\,t}{p_i\,\Xmax}\right), 0\right) + \epsilon/2\,.
\end{align*}
Now, going back to Equation~\eqref{eq:XXdiffmain}, we obtain:
\begin{align*}
	\E{\log (X/t) - \log (X'/t)} 
	& \leq \sum_{i=1}^k p_i \cdot \left(\max\left(\ln \left(\frac{2\,t}{p_i\,\Xmax}\right), 0\right) + \epsilon/2\right)
	\\ & \leq
	\epsilon/2 + \sum_{i: p_i \leq 2t/\Xmax} p_i \cdot \left(\ln(1/p_i) - \ln(\Xmax/2t)\right)
	\,.
\end{align*}
Let $w$ denotes the total probability of the elements with probability less than $2\,t/\Xmax$, and let $v$ denotes the number of such elements. 
$$w \coloneqq  \sum_{i: p_i \leq 2t/\Xmax} p_i\,, \quad\quad\quad v \coloneqq \left|\{i\mid p_i \leq 2t/\Xmax\}\right|\,.$$ 
By concavity of the natural log function and  Jensen's inequality, 
we have:
$$\sum_{i: p_i \leq 2t/\Xmax} p_i \ln (1/p_i) \leq w\ln(v/w) \leq w\ln(k/w)\,.$$

Now, we set $\Xmax$ to $2tk/\epsilon$. Thus, we get:

\begin{align*}  
	\E{\log (X/t) - \log (X'/t)}
	& \leq \epsilon/2 + w\ln(k/w) - w\ln(k/\epsilon)
	\\ & \leq \epsilon/2 + w\ln(\epsilon/w)
	\,.
\end{align*}
Note that by definition, $w$ is at most $k \cdot (2t/\Xmax)\leq \epsilon/2$. Using that $w\ln(\epsilon/(2w))$ is a concave function, and its maximum happens when $w = \epsilon/2 e$, one can conclude that
\begin{align*}  
	\E{\log (X/t) - \log (X'/t)}
	& \leq \epsilon\,.
\end{align*}
\end{proof}

\begin{lemma}\label{lem:tialBoundNB}
	Let $X$ be a negative binomial random variable drawn from $\NB(t, p)$. Then, we have the following tail bounds: 
\begin{align*}
	&
	\Pr{X \geq \tau t/p} \leq \exp\left(- \frac{t\tau}{8}\right)&\quad\quad\quad \forall \tau \geq 2\,,
	\\ &
	\Pr{X \leq \tau t/p} \leq \min\left(\exp\left(- \frac{t}{12}\right)\,,\tau\right)&\quad\quad\quad \forall \tau \leq 1/2\,.
\end{align*}
\end{lemma}
\begin{proof}
Note that one can view the event of $X$ being at least  $\tau t/p$ in a different way. This event implies out of $r \coloneqq \ceil{\tau t/p}$ Bernoulli trial with probability $p$ at there are at most $t$ successes. Let $Y$ denote the number of successes in $r$ Bernoulli trials. By the Chernoff bound, we obtain: 
\begin{align*}
	\Pr{X \geq \tau t/p} &\leq \Pr{Y \leq t} = \Pr{\frac{Y}{r} \leq \frac{t}{r}} \leq \Pr{\frac{Y}{r} \leq p/\tau =  (1-(1-1/\tau))\,p}
	\\& \leq \exp\left(- \frac{rp(1-1/\tau)^2}{2}\right) \leq \exp\left(- \frac{t \,\tau (1-1/\tau)^2}{2}\right)\,.
\end{align*}
To conclude the first tail bound from the above inequality, we use that $1-1/\tau$ is at least 1/2 for $\tau \geq 2$. 

Similar to the first part, we view the event of $X$ being smaller than $\tau t/p$ as an indication where among $r' = \floor{\tau t/p}$ trials of Bernoulli random variable, there are at least $t$ successes. Let $Y'$ indicate the number of successes in this case. Thus, using the Chernoff bound, we get: 
\begin{align*}
	\Pr{X \leq \tau t/p} &\leq \Pr{Y' \geq t} = \Pr{\frac{Y'}{r'} \geq \frac{t}{r'}} \leq \Pr{\frac{Y'}{r'} \geq \frac{p}{\tau} = ((1/\tau - 1) + 1)\,p}
	\\ & \leq \exp\left(-\frac{r'p (1/\tau - 1)^2}{1/\tau + 1}\right) \leq \exp \left(-\frac{t\,\tau(1/\tau - 1)^2}{2(1/\tau + 1)}\right) \leq \exp\left(-\frac{t}{12}\right)\,.
\end{align*}
Note that the last line is due to the fact that for $\tau \leq 0.5$, we have:
$$\frac{\tau(1/\tau - 1)^2}{2(1/\tau + 1)} \geq \frac{1}{12}\,.$$

In addition, if we use Markov's inequality:
\begin{align*}
	\Pr{X \leq \tau t/p} &\leq \Pr{Y' \geq t} = \Pr{\frac{Y'}{r'} \geq \frac{t}{r'}} \leq \frac{p\,r'}{t} \leq \tau\,.
\end{align*}

Thus, the proof of the lemma is complete. 
\end{proof}
}

\ignore{

\begin{lemma} \label{lem:Svals} Given our notation earlier, and the following lower bounds for the $r_\ell$'s:
\begin{align*}
	&r_\ell \geq \frac{8 \cdot 42^2 \, \left(\ln(42/\delta) + (\log b_L - \log b_{\ell-1})^{1/8}\right) \cdot (\log b_L - \log b_{\ell-1})^{5/2}}{\epsilon^2}
	\quad\quad\quad\quad \forall \ell \in [L-1]\,, \text{ and}
	\\&r_L \geq r_{L-1}\,.
\end{align*}

there exists a sequence of numbers $\{s_\ell\}_{\ell=1}^{L-1}$ such that the $s_\ell$'s are  in $[0,1]$, and the following holds:
\begin{enumerate}
	\item[$i)$] $\sum_{\ell=1}^L s_\ell <\frac{\epsilon}{3}\,,$
	\item[$ii)$] $
	\sum_{\ell=1}^{L} 3\exp\left(-\frac{r_\ell\,s_\ell^2 }{8\,(\log b_\ell - \log b_{\ell-1})^2}\right)< \delta\,.$
\end{enumerate}
\end{lemma}

\begin{proof} We prove this lemma with a similar proof to proof of Lemma~\ref{lem:Tvals}. We propose the following values for the sequence:
\begin{align*}
&s_\ell \coloneqq \frac{\epsilon}{42 (\log b_L - \log b_{\ell-1})^{1/4}}
	\quad\quad\quad\quad \forall \ell \in [L-1]\,, \text{ and}
	\\&s_L \coloneqq s_{L-1}
\,.
\end{align*}

Using Equation~\eqref{eq:boundDeltaLogBL}, it is clear that $s_\ell < 1$ for all $\ell$'s. 

We start by bounding the left-hand side of item $(i)$:
\begin{align*}
	\sum_{\ell=1}^{L} s_\ell & = s_{L} + \sum_{\ell=1}^{L-1} s_\ell  
	= s_{L-1} + \sum_{\ell=1}^{L-1} s_\ell 
	\leq 2 \sum_{\ell=1}^{L-1} s_\ell 
	 = \frac{\epsilon}{21} \sum_{\ell=1}^{L-1} \frac{1}{\left(\log b_L - \log b_{\ell-1})\right)^{1/4}} 
	\\ & \leq \frac{\epsilon}{21} \sum_{\ell=1}^{L-1} \frac{1}{\left(\log^{(\ell)}(k)\right)^{1/4}} <
	\frac{\epsilon}{3}
	\,.
\end{align*}
where for the last inequality, we use Proposition~\ref{prop:convergentSeries}.

Next, we prove item $(ii)$. Using the lower bound on the $r_\ell$'s and the fact that $b_1 < b_2 < \cdots < b_L$, we get: 
\begin{align*}\sum_{\ell=1}^{L} 3\exp\left(-\frac{r_\ell\,s_\ell^2 }{8\,(\log b_\ell - \log b_{\ell-1})^2}\right)
& = 3\exp\left(-\frac{r_L\,s_L^2 }{8\,(\log b_L - \log b_{L-1})^2}\right)
+
\sum_{\ell=1}^{L-1}  3\exp\left(-\frac{r_\ell\,s_\ell^2 }{8\,(\log b_\ell - \log b_{\ell-1})^2}\right)
\\ & \leq  3\exp\left(-\frac{r_L\,s_L^2 }{8\,(\log b_L - \log b_{L-1})^2}\right)
+
\sum_{\ell=1}^{L-1}  3\exp\left(-\frac{r_\ell\,s_\ell^2 }{8\,(\log b_L - \log b_{\ell-1})^2}\right)
\,.
\end{align*}
Note that in the last line the first term is equal to the last term in summation. Thus, we can bound the above by two times the summation:
\begin{align*}
	\sum_{\ell=1}^{L} 3\exp\left(-\frac{r_\ell\,s_\ell^2 }{8\,(\log b_\ell - \log b_{\ell-1})^2}\right)
	& \leq  6
\sum_{\ell=1}^{L-1}  \exp\left(-\frac{r_\ell\,s_\ell^2 }{8\,(\log b_L - \log b_{\ell-1})^2}\right)
\\ & \leq 6 \sum_{\ell = 1}^{L-1}  \exp \left(-\log(42/\delta) - (\log b_L - \log b_{\ell-1})^{1/8}
		\right)
	\\ & \leq
	\frac{\delta}{7}  \sum_{\ell=1}^{L-1} \frac{1}{\left(\log b_L - \log b_{\ell-1})\right)^{1/4}}
	< \delta\,.
\end{align*}
where for the second to last inequality, we use $e^{-x} \leq x^{-2}$, and  
the last inequality is obtained using Proposition~\ref{prop:convergentSeries}. Hence, the proof is complete. 
\end{proof}

\begin{lemma} \label{lem:Tvals} Given our notation earlier, and the following lower bounds for the $r_\ell$'s:
\begin{align*}
	r_\ell \geq \frac{21^2 \, \left(\ln(14/\delta) + (\log b_L - \log b_{\ell-1})^{1/8}\right) \cdot (\log b_L - \log b_{\ell-1})^{5/2}}{2\,\epsilon^2} \quad\quad\quad\quad \forall \ell \in [L-1]\,,
\end{align*}
there exists a sequence of numbers $\{t_\ell\}_{\ell=1}^{L-1}$ such that the $t_\ell$'s are  in $[0,1]$, and the following holds:
\begin{itemize}
	\item[$(i)$] $\sum_{\ell=1}^{L-1} t_\ell \cdot (\log b_L - \log b_{\ell-1}) < \epsilon/3\,,$
	\item[$(ii)$] $
	\sum_{\ell=1}^{L-1} 2\exp\left(-2\,r_\ell \, t_\ell^2\right) < \delta\,.$
\end{itemize}
\end{lemma}

\begin{proof} We propose the following values for the sequence. Set $t_\ell$ as follows:
	$$t_\ell \coloneqq \frac{\epsilon}{21 (\log b_L - \log b_{\ell-1})^{5/4}}\,.$$

Observe the following bound, which we use in this proof:
\begin{equation}\label{eq:boundDeltaLogBL}
	\log b_L - \log b_{\ell-1} = \log \left(b_L/b_{\ell-1}\right) = \log \left(\frac{6(\log^{(\ell-1)}(k))^3}{\epsilon}\right) = 3 \, \log^{(\ell)}(k) + \log (6/\epsilon) \geq \max(1, \log^{(\ell)}(k))
	\,.
\end{equation}
Using the above bound, it is clear that $t_\ell < 1$ for all $\ell$'s. 

We start by bounding the left-hand side of item $(i)$:
\begin{align*}
	\sum_{\ell=1}^{L-1} t_\ell \cdot (\log b_L - \log b_{\ell-1}) 
	& = \frac{\epsilon}{21} \sum_{\ell=1}^{L-1} \frac{1}{\left(\log b_L - \log b_{\ell-1})\right)^{1/4}} 
	\leq \frac{\epsilon}{21} \sum_{\ell=1}^{L-1} \frac{1}{\left(\log^{(\ell)}(k)\right)^{1/4}} < \frac{\epsilon}{3}
	\,.
\end{align*}
where for the last inequality, we use Proposition~\ref{prop:convergentSeries}.

Next, we prove item $(ii)$ using the lower bound on $r_\ell$ and the fact that $e^{-x} \leq x^{-2}$:
\begin{align*}
	\sum_{\ell=1}^{L-1} 2\exp\left(-2\,r_\ell \, t_\ell^2\right) 
	& \leq \sum_{\ell = 1}^{L-1} 2 \exp \left(-\log(14/\delta) - (\log b_L - \log b_{\ell-1})^{1/8}
		\right)
	\\ & \leq
	\frac{\delta}{7}  \sum_{\ell=1}^{L-1} \frac{1}{\left(\log b_L - \log b_{\ell-1})\right)^{1/4}}
	< \delta\,,
\end{align*}
where the last inequality is obtained using the same calculation as item ($i$). Hence, the proof is complete. 
\end{proof}

\begin{proposition} \label{prop:convergentSeries} For a positive integer $k$, let $L = \log^* (k)$. Then, we have:
$\sum_{\ell=1}^{L-1} \left(\log^{(\ell)}(k)\right)^{-1/4}
	< 7$.
\end{proposition}
\begin{proof}
	We claim $\log^{(L-i)}(k) > 2^{i-1}$ for $i \in \{1,2,\ldots,L-1\}$. One can show this inequality by induction. For $i=1$, we use the definition of $L = \log^*(k)$. Note that $\log^*(k)$ is the number of times we apply the logarithm function before we reach a number that is at most one. Thus, 
$\log^{(L-1)}(k)$ is greater than one. For $i > 1$, using the induction hypothesis for $i-1$, we obtain:
\begin{align*}
	\log^{(L-i)}(k) =  2^{\log^{(L-(i-1))}(k)} > 2^{2^{i-2}} \geq 2^{i-1}
	\,.
\end{align*}
Now, we focus on the series in the statement of the lemma:
\begin{align*}
	\sum_{\ell=1}^{L-1} \frac{1}{\left(\log^{(\ell)}(k)\right)^{1/4}}
	 \leq  \sum_{\ell=1}^{L-1} 2^{-(\ell-1)/4}
	\leq \sum_{\ell=1}^{\infty} 2^{-(\ell-1)/4} =  \frac{2}{2 - 2^{3/4}} < 7
	\,.
\end{align*}
Thus, the proof is complete.
\end{proof}
}

\end{document}